\newif\if@restonecol
\newtheorem{example}{Example}
\newtheorem{theorem}{Theorem}
\newtheorem{lemma}{Lemma}
\newtheorem{definition}{Definition}
\newcommand{\kwnospace}[1]{{\ensuremath {\mathsf{#1}}}}
\newlength{\figsize} \setlength{\figsize}{0.22\textwidth}
\renewcommand{\footnotesize}{\small}
\newcommand{\Break}{\textbf{break}}
\newcommand{\fg} {G^*}
\newcommand{\tc} {TC}
\newcommand{\thop} {2hop_G}
\newcommand{\nb} {N_G}
\newcommand{\p} {p}
\newcommand{\nbf} {N_{G^*}}
\newcommand{\degree} {deg_G}
\newcommand{\btf}{\,\mathbin{\resizebox{0.055in}{!}{\rotatebox[origin=c]{90}{$\Join$}}}}
\newcommand{\btfc}{\textit{butterfly counting}\xspace}
\newcommand{\bs}{\kwnospace{BFC}\textrm{-}\kwnospace{BS}\xspace}
\newcommand{\bsa}{\kwnospace{BFC}\textrm{-}\kwnospace{IBS}\xspace}
\newcommand{\new} {\kwnospace{BFC}\textrm{-}\kwnospace{VP}\xspace}
\newcommand{\newa} {\kwnospace{BFC}\textrm{-}\kwnospace{VP^{++}}\xspace}
\newcommand{\newaw} {\kwnospace{BFC}\textrm{-}\kwnospace{VP^{+}}\xspace}
\newcommand{\newac} {\kwnospace{BFC}\textrm{-}\kwnospace{VPC}\xspace}
\newcommand{\bsap} {\kwnospace{BFC}\textrm{-}\kwnospace{ESap}\xspace}
\newcommand{\newap} {\kwnospace{BFC}\textrm{-}\kwnospace{ESap_{vp^{++}}}\xspace}
\newcommand{\bse}{\kwnospace{BFC}\textrm{-}\kwnospace{EIBS}\xspace}
\newcommand{\newe} {\kwnospace{BFC}\textrm{-}\kwnospace{EVP}\xspace}
\newcommand{\newae} {\kwnospace{BFC}\textrm{-}\kwnospace{EVP^{++}}\xspace}
\newcommand{\newaem} {\kwnospace{BFC}\textrm{-}\kwnospace{EM}\xspace}
\newcommand{\bsf} {baseline butterfly counting algorithm\xspace}
\newcommand{\bsaf} {improved baseline butterfly counting algorithm\xspace}
\newcommand{\newf} {vertex-priority-based butterfly counting algorithm\xspace}
\newcommand{\newaf} {cache-aware butterfly counting algorithm\xspace}
\newcommand{\fin} {f^{-1}}
\newcommand{\cate}{\mathbin{\resizebox{0.098in}{!}{\rotatebox[origin=c]{270}{$\ltimes$}}}}
\setlist{nolistsep}
\g@addto@macro\normalsize{%
\setlength\abovedisplayskip{-1pt}
\setlength\belowdisplayskip{0pt}
\setlength\abovedisplayshortskip{-1pt}
\setlength\belowdisplayshortskip{0pt}
}
\begin{document}

\title{Efficient Butterfly Counting for Large Bipartite Networks}

\author{
\alignauthor Kai Wang$^{\dagger\ddagger}$, Xuemin
Lin$^{\dagger\ddagger}$, Lu Qin$^\star$, Wenjie Zhang$^\dagger$, Ying Zhang$^\star$
\vspace{0.1cm} \\
       \affaddr{$^\dagger$University of New South Wales, Australia}\ \ \
       \affaddr{$^\ddagger$Zhejiang Lab, China} \ \ \
       \affaddr{$^\star$University of Technology Sydney, Australia} \\
	   \affaddr{{kai.wang@unsw.edu.au, \{lxue,zhangw\}@cse.unsw.edu.au, \{lu.qin, ying.zhang\}@uts.edu.au}}
}

\maketitle

\begin{abstract}
Bipartite networks are of great importance in many real-world applications. In bipartite networks, butterfly (i.e., a complete $2 \times 2$ biclique) is the smallest non-trivial cohesive structure and plays a key role. In this paper, we study the problem of efficient counting the number of butterflies in bipartite networks. The most advanced techniques
are based on enumerating wedges which is the dominant cost of counting butterflies. Nevertheless, the existing algorithms cannot efficiently handle large-scale bipartite networks. This becomes a bottleneck in large-scale applications. In this paper, instead of the existing layer-priority-based techniques, we propose a vertex-priority-based paradigm \new to enumerate much fewer wedges; this leads to a significant improvement of the time complexity of the state-of-the-art algorithms.
In addition, we present cache-aware strategies to further improve the time efficiency while theoretically retaining the time complexity of \new.
Moreover, we also show that our proposed techniques can work efficiently in external and parallel contexts.
Our extensive empirical studies demonstrate that the proposed techniques can speed up the state-of-the-art techniques by up to two orders of magnitude for the real datasets.

\end{abstract}

\section{Introduction}
\label{sct:introduction}

When modelling relationships between two different types of entities, the bipartite network arises naturally as a data model in many real-world applications \cite{borgatti1997network, latapy2008basic}. For example, in online shopping services (e.g., Amazon and Alibaba), the purchase relationships between users and products can be modelled as a bipartite network, where users form one layer, products form the other layer, and the links between users and productions represent purchase records as shown in Figure \ref{fig:example1}. Other examples include author-paper relationships, actor-movie networks, etc.

\begin{figure}[thb]
\begin{centering}
\includegraphics[trim=0 0 0 0,width=0.30\textwidth]{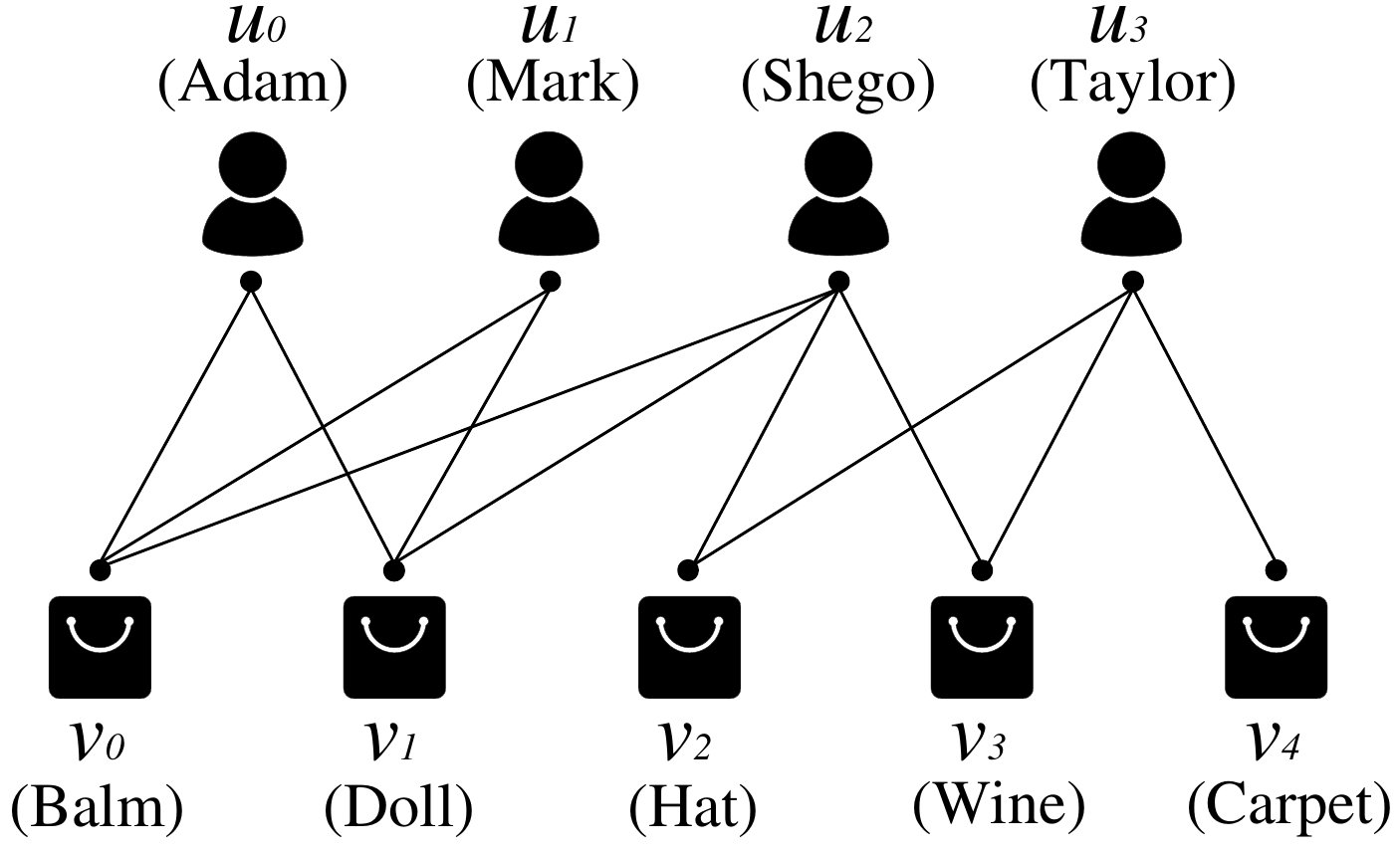}
\vspace*{-3mm}\caption{A bipartite network}
\label{fig:example1}
\vspace*{-1mm}
\end{centering}
\end{figure}

Since network motifs (i.e., repeated sub-graphs) are regarded as basic building blocks of complex networks \cite{milo2002network}, finding and counting motifs of networks is a key to network analysis. In unipartite networks, there are extensive studies on counting and listing triangles (the smallest non-trivial clique) in the literature \cite{chang2017scalable, kolountzakis2012efficient, chu2012triangle, seshadhri2013triadic, stefani2017triest, shun2015multicore, al2018triangle, hu2013massive, latapy2008main, schank2005finding, suri2011counting}. In bipartite networks, {\em butterfly} (i.e., a complete 2 $\times$ 2 biclique) is the simplest bi-clique configuration with equal numbers of vertices of each layer (apart from the trivial single edge configuration) that has drawn reasonable attention recently \cite{robins2004small, sanei2018butterfly, wang2014rectangle, aksoy2017measuring, sariyuce2018peeling, zou2016bitruss}; for instance, Figure \ref{fig:example1} shows the record that Adam and Mark both purchased Balm and Doll forms a butterfly. In this sense, the butterfly can be viewed as an analogue of the triangle in a unipartite graph. Moreover, without butterflies, a bipartite graph will not have any community structure \cite{aksoy2017measuring}.


In this paper, we study the problem of {\em butterfly counting}, that is to compute the number of butterflies in a bipartite graph $G$, denoted by $\btf_G$. The importance of {\em butterfly counting} has been demonstrated in the literature of network analysis and graph theory. Below are some examples.

\noindent
{\em Network measurement.} The {\em bipartite clustering coefficient} \cite{robins2004small, lind2005cycles, aksoy2017measuring, opsahl2013triadic} is a cohesiveness measurement of bipartite networks. Given a bipartite graph $G$, its bipartite clustering coefficient equals $4 \times \btf_G$/$\cate_G$, where $\cate_G$ is the number of caterpillars in $G$ --- the number of three-paths. For example, ($u_0$, $v_0$, $u_1$, $v_1$) in Figure \ref{fig:example1} is a three-path. High bipartite clustering coefficient indicates localized closeness and redundancy in bipartite networks \cite{robins2004small, aksoy2017measuring}; for instance, in user-product networks, bipartite clustering coefficients can be used frequently to analyse the sale status for products in different categories. These statistics can also be used in \texttt{Twitter} network \cite{Twitter} for internet advertising where the \texttt{Twitter} network is the bipartite network consisting of Twitter users and the URLs they mentioned in their postings. Since $\cate_G$ can be easily computed in $O(m)$ time where $m$ is the number of edges in $G$ \cite{aksoy2017measuring}, computing $\btf_G$ becomes a bottleneck in computing the clustering coefficient.


\noindent
{\em Summarizing inter-corporate relations.}
In a director-board network, two directors on the same two boards can be modelled as a butterfly. These butterflies can reflect inter-corporate relations \cite{palmer1983broken, ornstein1984interlocking, ornstein1982interlocking}.
The number of butterflies indicates the extent to which directors re-meet one another on two or more boards. A large butterfly counting number indicates a large number of inter-corporate relations and formal alliances between companies \cite{robins2004small}.

\noindent
{\em Computing $k$-wing in bipartite graphs.}
Counting the number of butterflies for each edge also has applications. For example, it is the first step to compute a $k$-wing \cite{sariyuce2018peeling}  (or $k$-bitruss \cite{zou2016bitruss}) for a given $k$ where $k$-wing is the maximum subgraph of a bipartite graph with each edge in at least $k$ butterflies. Discovering such dense subgraphs is proved useful in many applications e.g., community detection \cite{peng2018efficient, zhang2018discovering, zhang2018efficiently, yixiang2019survey}, word-document clustering \cite{dhillon2001co}, and viral marketing \cite{fain2006sponsored, liu2019, zhang2017engagement, wang2018efficient}. Given a bipartite graph $G$, the proposed algorithms \cite{zou2016bitruss, sariyuce2018peeling} for $k$-wing computation is to first count the number of butterflies on each edge in $G$. After that, the edge with the lowest number of butterflies is iteratively removed from $G$ until all the remaining edges appear in at least $k$ butterflies.

Note that in real applications, butterfly counting may happen not only once in a graph. We may need to conduct such a computation against an arbitrarily specified subgraph. Indeed, the demand of butterfly counting in large networks can be very high.
However, the state-of-the-art algorithms cannot efficiently handle large-scale bipartite networks. As shown in \cite{sanei2018butterfly}, on the \texttt{Tracker} network with $10^8$ edges, their algorithm needs about 9,000 seconds to compute $\btf_G$. Therefore, the study of efficient butterfly counting is imperative to support online large-scale data analysis. Moreover, some applications demand exact butterfly counting in bipartite graphs. For example, in $k$-wing computation, approximate counting does not make sense since the $k$-wing decomposition algorithm in \cite{sariyuce2018peeling} needs to iteratively remove the edges with the lowest number of butterflies; the number has to be exact.



\noindent
{\bf State-of-the-art.} Consider that there can be $O (m^2)$ butterflies in the worst case. Wang et al. in \cite{wang2014rectangle} propose an algorithm to avoid enumerating all the butterflies. It has two steps. At the first step, a layer is randomly selected.  Then, the algorithm iteratively starts from every vertex $u$ in the selected layer, computes the $2$-hop reachable vertices from $u$, and for each $2$-hop reachable vertex $w$, counts the number $n_{uw}$ of times reached from $u$. At the second step, for each $2$-hop reachable vertex $w$ from $u$, we count the number of butterflies containing both $u$ and $w$ as $n_{uw} (n_{uw}-1)/2$. For example, regarding Figure \ref{fig:example1}, if the lower layer is selected, starting from the vertex $v_0$, vertices $v_1$, $v_2$, and $v_3$ are $2$-hop reached $3$ times, $1$ time, and $1$ time, respectively. Thus, there are $C_3^2$ ($=3$) butterflies containing $v_0$ and $v_1$ and no butterfly containing $v_0$ and $v_2$ (or $v_0$ and $v_3$). Iteratively, the algorithm will first use $v_0$ as the start-vertex, then $v_1$, and so on. Then, we add all the counts together; the added counts divided by two is the total number of butterflies.

Observe that the time complexity of the algorithm in \cite{wang2014rectangle} is $O (\sum_{u \in U(G)}\degree(u)^2))$ if the lower layer $L (G)$ of $G$ is chosen to have start-vertices, where $U (G)$ is the upper layer. Sanei et al. in \cite{sanei2018butterfly} chooses a layer $S$ such that $O (\sum_{v \in S}\degree(v)^2))$ is minimized among the two layers.

%

\vspace{0.08cm}
\noindent
{\bf Observation.} In the existing algorithms \cite{wang2014rectangle,sanei2018butterfly},
the dominant cost is at Step 1 that enumerates wedges to compute $2$-hop reachable vertices and their hits. For example, regarding Figure \ref{fig:example1}, we have to traverse $3$ wedges, $(v_0, u_0, v_1)$, $(v_0, u_1, v_1)$, and $(v_0, u_2, v_1)$ to get all the hits from $v_0$ to $v_1$. Here, in the wedge  $(v_0, u_0, v_1)$, we refer $v_0$ as the start-vertex, $u_0$ as the middle-vertex, and $v_1$ as the end-vertex.
Continue with the example in Figure \ref{fig:example1}, using $u_2$ as the middle-vertex, starting from $v_0$, $v_1$, and $v_2$, respectively, we need to traverse totally $6$ wedges.

We observe that the choice of middle-vertices of wedges (i.e., the choice of start-vertices) is a key to improving the efficiency of counting butterflies. For example, consider the graph $G$ with $2,002$ vertices and $3,000$ edges in Figure \ref{fig:ideas}(a). In $G$, $u_0$ is connected with $1,000$ vertices ($v_0$ to $v_{999}$),  $v_{1000}$ is also connected with $1,000$ vertices ($u_1$ to $u_{1000}$), and for $0 \leq i \leq 999$, $v_i$ is connected with $u_{i+1}$. The existing algorithms need to go through $u_0$ (or $v_{1000}$) as the middle-vertex if choosing $L(G)$ (or $U (G)$) to start. Therefore, regardless of whether the upper or the lower layer is selected to start, we have to traverse totally $C_{1000}^2$ ($=499,500$) plus $1,000$ different wedges by the existing algorithms \cite{wang2014rectangle,sanei2018butterfly}.

\begin{figure}[htb]
\begin{centering}
\includegraphics[trim=0 0 0 0,width=0.47\textwidth]{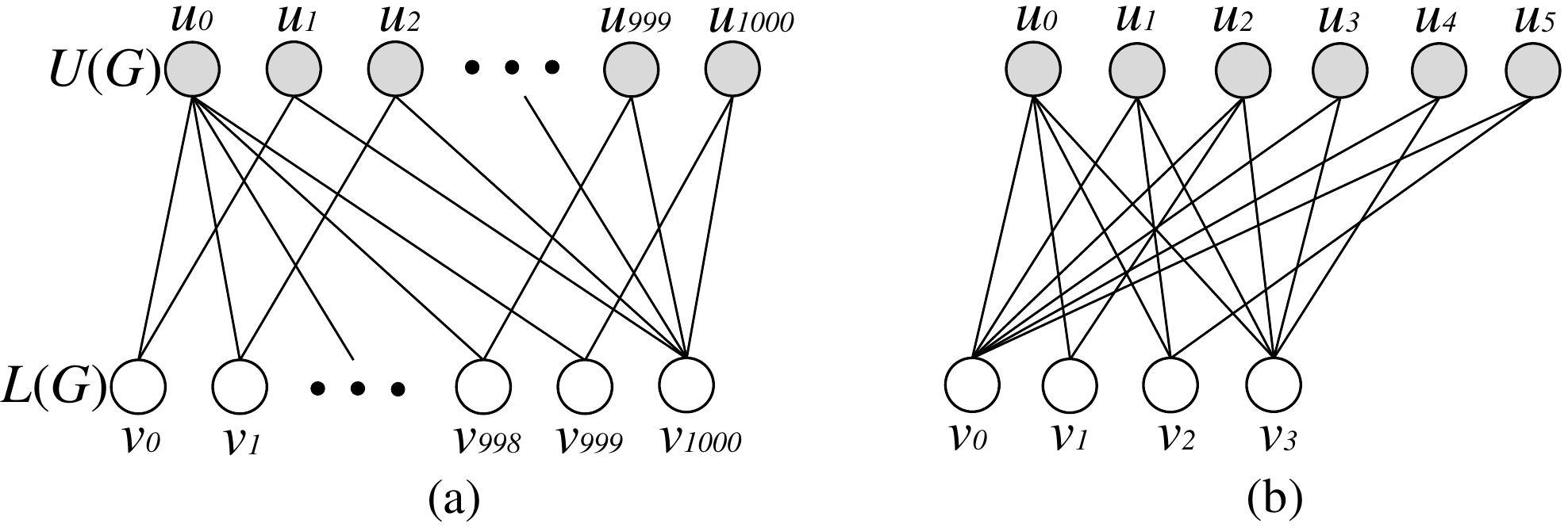}
\vspace*{-3mm}
\caption{Some observations}
\label{fig:ideas}
\end{centering}
\end{figure}

\noindent
{\bf Challenges.}
The main challenges of efficient butterfly counting are twofold.

\begin{enumerate}
\item Using high-degree vertices as middle-vertices of wedges may generate numerous wedges to be scanned. The existing techniques \cite{wang2014rectangle,sanei2018butterfly}, including the layer-priority-based techniques \cite{sanei2018butterfly}, cannot avoid using unnecessary high-degree vertices as middle-vertices as illustrated earlier.   Therefore, it is a challenge to effectively handle high-degree vertices. 
\item Effectively utilizing CPU cache can often reduce the computation dramatically. Therefore, it is also a challenge to utilize CPU cache to speed up the counting of butterflies.
\end{enumerate}

\noindent
{\bf Our approaches.} To address Challenge 1, instead of the existing layer-priority-based algorithm, we propose a \newf\ \new that can effectively handle hub vertices (i.e., high-degree vertices). To avoid over-counting or miss-counting, we propose that for each edge $(u, v)$, the new algorithm \new uses
the vertex with a higher degree as the start-vertex so that the vertex with a lower degree will be used as the middle-vertex. Specifically, the \new algorithm will choose one end vertex of an edge $(u, v)$ as the start-vertex, say $u$, according to its priority. The higher degree, the higher priority; and the ties are broken by vertex ID.  For example, regarding Figure \ref{fig:ideas}(a), the \new algorithm will choose $u_0$ and $v_{1000}$ as start-vertices; consequently, only $2,000$ wedges in total will be scanned by our algorithm compared with $500,500$ different wedges generated by the existing algorithms as illustrated earlier. Once all edges from the starting vertex $u$ are exhausted, \new moves to another edge. This is the main idea of our \new algorithm.

As a result, the time complexity of our \new algorithm is $O(\sum_{(u, v) \in E(G)}min \{\degree(u), \degree(v)\})$ which is in general significantly lower than the time complexity of the state-of-the-art algorithm in \cite{sanei2018butterfly},
$O (\min \{ \sum_{v \in U(G)}\degree(v)^2, \sum_{v \in L(G)}\degree(v)^2)\} )$, considering $\degree(v)^2 = \sum_{(u, v) \in E (G)} \degree(v)$ where $v$ is fixed.
In the \new algorithm, there are $O(n)$ accesses of start-vertices because we need to explore every vertex as a start-vertex only once, $O(m)$ accesses of middle-vertices and $O(\sum_{(u, v) \in E(G)}min\{\degree(u), \degree(v)\})$ accesses of end-vertices in the processed wedges.
Thus, the number of accesses to end-vertices is dominant.
Given that the cache miss latency takes a big part of the memory access time \cite{ailamaki1999dbmss}, improving the CPU cache performance when accessing the end-vertices becomes a key issue.  Our second algorithm, the cache-aware algorithm \newa, aims to improve the CPU cache performance of \new by having high-degree vertices as end-vertices to enhance the locality while retaining the total $O(\sum_{(u, v) \in E(G)}min\{\degree(u), \degree(v)\})$ accesses of end-vertices (thus, retain the time complexity of the \new algorithm). Consequently,  \newa proposes to request the end-vertices to be prioritized in the same way as the start-vertices in the \new algorithm.

%

For example, considering the graph in Figure \ref{fig:ideas}(b), we have $p(v_0) > p(v_3) > p(u_0) > p(v_2) > p(v_1)$ according to their degrees where $p(v)$ denotes the priority of a vertex $v$. In this example, starting from $v_0$ to $v_3$, going through $u_0$, \new needs to process $5$ wedges using $u_0$ as the middle-vertex (i.e., $(v_0, u_0, v_1)$, $(v_0, u_0, v_2)$, $(v_0, u_0, v_3)$, $(v_3, u_0, v_1)$ and $(v_3, u_0, v_2)$), and there are $3$ vertices, $v_1$, $v_2$ and $v_3$ that need to be performed as end-vertices. Note that these are the only $5$ wedges using $u_0$ as the middle-vertex since $p(u_0) > p(v_2) > p(v_1)$. Regarding the same example, \newa also needs to process exactly $5$ wedges with $u_0$ as the middle-vertex, $(v_1, u_0, v_0)$, $(v_1, u_0, v_3)$, $(v_2, u_0, v_0)$, $(v_2, u_0, v_3)$ and $(v_3, u_0, v_0)$; however only $2$ vertices, $v_0$ and $v_3$, are performed as end-vertices.

We also propose the cache-aware projection strategy to improve the cache performance by storing high-priority (more frequently accessed) end-vertices together to reduce the cache-miss \cite{wei2016speedup}. Considering the example in Figure \ref{fig:ideas}(b), \newa will store $v_0$ and $v_3$ together after projection.


\noindent
{\bf Contribution.}
Our principal contributions are summarized as follows.
\begin{itemize}

\item We propose a novel algorithm \new to count the butterflies that significantly reduce the time complexities of the existing algorithms in both theory and practice.


\item We propose a novel \newaf\ \newa by adopting cache-aware strategies to \new. The \newa algorithm achieves better CPU cache performance than \new.


\item We can replace the exact counting algorithm in the approximate algorithm \cite{sanei2018butterfly} by our exact counting algorithm
 for a speedup.

\item By extending our framework, we present an external-memory algorithm and a parallel algorithm for butterfly counting.

\item We conduct extensive experiments on real bipartite networks. The result shows that our proposed algorithms \new and \newa outperform the state-of-the-art algorithms by up to $2$ orders of magnitude. For instance, the \newa algorithm can count $10^{12}$ butterflies in $50$ seconds on \texttt{Tracker} dataset with $10^8$ edges, while the state-of-the-art butterfly counting algorithm \cite{sanei2018butterfly} runs about $9,000$ seconds.

\end{itemize}


\noindent \textbf{Organization.} The rest of the paper is organized as follows.
The related work follows immediately.
Section~\ref{sct:preliminaries} presents the problem definition.
Section~\ref{sct:benchmark1} introduces the existing algorithms \bs and \bsa. The \new algorithm is presented in Section~\ref{sct:new}. Section~\ref{sct:newa} explores cache-awareness. Section~\ref{sct:extension} extends our algorithm to count butterflies against each edge, the parallel execution of our proposed algorithms and the external memory solution. Section~\ref{sct:experiment} reports experimental results. 
Section~\ref{sct:conclusion} concludes the paper.

\vspace{0.1cm}
\noindent
{\bf Related Work.}

\vspace{0.1cm}
\noindent
{\em Motif counting in unipartite networks.} Triangle is the smallest non-trivial cohesive structure and there are extensive studies on counting triangles in the literature  \cite{chang2017scalable, becchetti2008efficient, schank2005finding, kolountzakis2012efficient, chu2012triangle, seshadhri2013triadic, stefani2017triest, shun2015multicore, al2018triangle, hu2013massive, latapy2008main, itai1978finding, schank2005finding, suri2011counting}. However, the butterfly counting is inherently different from the triangle counting for two reasons, 1) the number of butterflies may be significantly larger than that of triangles ($O(m^2)$ vs $O(m^{1.5})$ in the worst case), and 2) the structures are different ($4$-hops' circle vs $3$-hops' circle).
Thus, the existing triangle counting techniques are not applicable to efficient butterfly counting because the existing techniques for counting triangles (e.g., \cite{chu2012triangle, shun2015multicore}) are based on enumerating all triangles and the enumeration is not affordable in counting butterflies due to the quadratic number  $O (m^2)$ of butterflies in the worst case.

There are also some studies \cite{jain2017fast, pinar2017escape, jha2015path} focusing on the other cohesive structures such as $4$-vertices and $5$-vertices, these techniques also cannot be used to solve our problem. In \cite{alon1997finding}, the authors propose generic matrix-multiplication based algorithm for counting the cycles of length $k$ ($3 \leq k \leq 7$) in $O(n^{2.376})$ time and $O (n^2)$ space. While the algorithm in \cite{alon1997finding} can be used to solve our problem, it cannot process large graphs due to its space and time complexity. As shown in \cite{wang2014rectangle}, the algorithm in \cite{wang2014rectangle} has a significant improvement over \cite{alon1997finding}, while our algorithm significantly improves \cite{wang2014rectangle}.


\vspace{0.1cm}
\noindent
{\em Bipartite Networks.} Some studies are conducted towards motifs such as $3 \times 3$ biclique \cite{borgatti1997network} and $4$-path \cite{opsahl2013triadic}. These structures are different from the butterfly thus these works also cannot be used to solve the butterfly counting problem.
As mentioned earlier, the study in this paper aims to improve the recent works in \cite{wang2014rectangle, sanei2018butterfly}.

\vspace{0.1cm}
\noindent
{\em Graph ordering.} There are some studies on specific graph algorithms using graph ordering. Then et al. \cite{then2014more} optimize BFS algorithms. Park et al. \cite{park2004optimizing} improve the CPU cache performance of many classic graph algorithms such as Bellman-Fold and Prim. The authors in \cite{han2018speeding} present a suite of approaches to accelerate set intersections in graph algorithms. Since these techniques are specific to the problems studied, they are not applicable to butterfly counting.

In the literature, there are also recent works studying general graph ordering methods to speed up graph algorithms \cite{wei2016speedup, blandford2003compact, chierichetti2009compressing, dhulipala2016compressing, kang2011beyond, auroux2015reordering, boldi2009permuting, boldi2011layered, yu2013more}. In the experiments, we show that our cache-aware techniques outperform the state-of-the-art technique \cite{wei2016speedup}; that is, our cache-aware strategy is more suitable for counting butterflies.

\section{Problem Definition}
\label{sct:preliminaries}

In this section, we formally introduce the notations and definitions. Mathematical notations used throughout this paper are summarized in Table~\ref{tb:notations}.

\begin{table}[htb]
\footnotesize
\caption{The summary of notations}
\vspace{-3mm}
  \centering
    \begin{tabular}{|c|l|}
      \hline
      \cellcolor{gray!25}\textbf{Notation} & \cellcolor{gray!25}\textbf{Definition}             \\ \hline

      $G$   &  a bipartite graph \\ \hline
      $V(G)$   &  the vertex set of $G$ \\ \hline
      $E(G)$   & the edge set of $G$ \\ \hline
      $U(G),L(G)$   &  a vertex layer of $G$ \\ \hline
      $u, v, w, x$  & a vertex in the bipartite graph \\ \hline
      $e, (u, v)$  & an edge in the bipartite graph \\ \hline
      $(u, v, w)$  & a wedge formed by $u$, $v$, $w$ \\ \hline
      $[u, v, w, x]$  &  a butterfly formed by $u$, $v$, $w$, $x$\\ \hline
      $deg_G(u)$   & the degree of vertex $u$ \\ \hline
      $\p(u)$   & the priority of vertex $u$ \\ \hline
      $\nb(u)$   & the set of neighbors of vertex $u$ \\ \hline
      $\thop(u)$   & the set of two-hop neighbors of vertex $u$ \\ \hline
      $\btf_e$  & the number of butterflies containing an edge $e$ \\ \hline
      $\btf_G$  & the number of butterflies in $G$ \\ \hline
      $n, m$  & the number of vertices and edges in $G$ ($m > n$) \\ \hline
    \end{tabular}
\label{tb:notations}
\end{table}


Our problem is defined over an undirected bipartite graph $G(V=(U, L), E)$, where $U(G)$ denotes the set of vertices in the upper layer, $L(G)$ denotes the set of vertices in the lower layer, $U(G) \cap L(G) = \emptyset$, $V(G) = U(G) \cup L(G)$ denotes the vertex set, and $E(G) \subseteq U(G) \times L(G)$ denotes the edge set.  We use $n$ and $m$ to denote the number of vertices and edges in $G$, respectively, and we assume $m > n$.
In addition, we use $r$ and $l$ to denote the number of vertices in $U(G)$ and $L(G)$, respectively.
An edge between two vertices $u$ and $v$ in $G$ is denoted as $(u, v)$ or $(v, u)$. The set of neighbors of a vertex $u$ in $G$ is denoted as $\nb(u) = \{ v\in V(G) \mid (u, v) \in E(G) \} $, and the degree of $u$ is denoted as $deg_G(u) = |N_G(u)|$. The set of $2$-hop neighbors of $u$ (i.e., the set of vertices which are exactly two edges away from $u$) is denoted as $\thop(u)$. Each vertex $u$ has a unique id and we assume for every pair of vertices $u \in U(G)$ and $v \in L(G)$, $u.id > v.id$. 


\begin{definition}[Wedge]Given a bipartite graph $G(V, E)$ and vertices $u$, $v$, $w \in V(G)$. A path starting from $u$, going through $v$ and ending at $w$ is called a wedge which is denoted as $(u, v, w)$. For a wedge $(u, v, w)$, we call $u$ the start-vertex, $v$ the middle-vertex and $w$ the end-vertex.
\end{definition}


%

\begin{definition}[Butterfly]Given a bipartite graph $G$ and the four vertices $u, v, w, x \in V(G)$ where $u, w \in U(G)$ and $v, x \in L(G)$, a butterfly induced by the vertices $u, v, w, x$ is a (2,2)-biclique of $G$; that is, $u$ and $w$ are all connected to $v$ and $x$, respectively, by edges.
\end{definition}
A butterfly induced by vertices $u, v, w, x$ is denoted as $[u, v, w, x]$. We denote the number of butterflies containing a vertex $u$ as $\btf_u$, the number of butterflies containing an edge $e$ as $\btf_e$ and the number of butterflies in $G$ as $\btf_G$.


\noindent
\textbf{Problem Statement. }
Given a bipartite graph $G(V,E)$, our \btfc problem is to compute $\btf_G$.



\section{Existing Solutions}
\label{sct:benchmark1}

In this section, we briefly discuss the two existing algorithms, the \bsf \ \bs \cite{wang2014rectangle} and the \bsaf \ \bsa \cite{sanei2018butterfly}. As discussed earlier, both algorithms are based on enumerating
wedges. The following Lemma \ref{lemma:existing} \cite{wang2014rectangle}
is a key to the two algorithms.



\begin{lemma}
\label{lemma:existing}
Given a bipartite graph $G(V, E)$ and a vertex $u \in G$, we have the following equations:\\
\begin{align}
& \btf_u = \sum_{w \in \thop(u)}\binom{|\nb(u) \cap \nb(w)|}{2}\\
& \btf_G = \frac{1}{2}\sum_{u \in U(G)}\btf_u = \frac{1}{2}\sum_{v \in L(G)}\btf_v
\end{align}
\end{lemma}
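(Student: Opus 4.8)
The plan is to establish both identities by elementary double-counting, carefully tracking how many times each butterfly is counted on each side.

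\emph{Equation (1).} Fix $u \in V(G)$; say $u \in U(G)$ (the other case is symmetric). First I would observe that every butterfly containing $u$ has the form $[u, v, w, x]$ with $w \in U(G)$, $w \ne u$, and $v, x \in L(G)$, $v \ne x$, and that all four edges $uv, ux, wv, wx$ are present, so $\{v, x\} \subseteq \nb(u) \cap \nb(w)$. The crucial point is that, with $u$ pinned down, this representation is unique: $w$ is forced to be the other vertex of the butterfly lying in $u$'s layer, and $\{v, x\}$ is the unordered pair of vertices in the opposite layer. Hence I would group the butterflies containing $u$ according to the value of $w$. For a fixed such $w$, a butterfly through $u$ and $w$ is exactly a choice of an unordered pair $\{v, x\}$ of distinct common neighbours of $u$ and $w$, and conversely every such pair gives a genuine butterfly since the four required edges are automatically present; thus the number of butterflies through both $u$ and $w$ is $\binom{|\nb(u)\cap\nb(w)|}{2}$. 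Summing over all eligible $w$ and then noting that any $w \ne u$ in $u$'s layer with $\nb(u)\cap\nb(w) = \emptyset$ contributes $\binom{0}{2}=0$, while any $w \in \thop(u)$ with $|\nb(u)\cap\nb(w)| = 1$ contributes $\binom{1}{2}=0$, I conclude that restricting the sum to $w \in \thop(u)$ leaves it unchanged, which is Equation (1).

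\emph{Equation (2).} Here I would count incidences between butterflies and vertices of $U(G)$: by the definition of a butterfly, each one has precisely two of its four vertices in $U(G)$. Therefore $\sum_{u \in U(G)} \btf_u$ counts every butterfly of $G$ exactly twice, i.e.\ equals $2\,\btf_G$. Running the identical argument with $L(G)$ in place of $U(G)$ gives $\sum_{v \in L(G)} \btf_v = 2\,\btf_G$ as well, and dividing by $2$ yields Equation (2).

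\emph{Main obstacle.} Both parts are bookkeeping rather than genuine difficulty; the only points requiring care are verifying that the map sending a butterfly through $u$ to the pair $(w, \{v,x\})$ is a bijection onto its image (so no butterfly is over- or under-counted), and handling the degenerate terms with $|\nb(u)\cap\nb(w)| \le 1$ so that replacing the sum over all same-layer $w$ by a sum over $\thop(u)$ is justified.
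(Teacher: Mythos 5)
Your proof is correct: the grouping of butterflies through $u$ by the partner vertex $w$ in the same layer, the identification of each such butterfly with an unordered pair of common neighbours, the treatment of the degenerate terms with $|\nb(u)\cap\nb(w)|\le 1$, and the incidence count showing each butterfly is counted exactly twice in $\sum_{u\in U(G)}\btf_u$ are all sound. Note that the paper itself states this lemma without proof, citing it from prior work, so there is no in-paper argument to compare against; your double-counting argument is the standard one and fills that gap correctly.
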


In fact, \bsa has the same framework as \bs and improves \bs in two aspects: (1) pre-choosing the layer of start-vertices to achieve a lower time complexity; (2) using a hash map to speed up the implementation.
The details of the \bsa algorithm are shown in Algorithm \ref{algo:benchmark}.

\begin{algorithm}[hbt]
\small
\DontPrintSemicolon
\KwIn{$G(V = (U, L), E)$: the input bipartite graph}
\KwOut{$\btf_G$}
$\btf_G \gets 0$\;
$S \gets U(G)$\;
 \If {$\sum_{u \in U(G)}\degree(u)^2 < \sum_{v \in L(G)}\degree(v)^2$} {
    $S \gets L(G)$\;
}
\ForEach{$u \in S$} {
    initialize hashmap $count\_wedge$ with zero\;
    \ForEach{$v \in \nb(u)$} {
        \ForEach{$w \in \nb(v): w.id > u.id$} {
            $count\_wedge(w) \gets count\_wedge(w) + 1$\;
        }
    }
    \ForEach{$w \in count\_wedge$} {
        \If {$count\_wedge(w) > 1$} {
            $\btf_G \gets \btf_G + \binom{count\_wedge(w)}{2}$\;
        }
    }
}
\Return{$\btf_G$}\;
\caption{{\sc \bsa}}
\label{algo:benchmark}
\end{algorithm}

Note that to avoid counting a butterfly twice, for each middle-vertex $v \in \nb(u)$ and the corresponding end-vertex $w \in \nb(v)$, \bsa processes the wedge $(u, v, w)$ only if $w.id > u.id$; consequently, in Algorithm \ref{algo:benchmark} we do not need to use the factor $\frac{1}{2}$ in Equation 2 of Lemma \ref{lemma:existing}.

As shown, the time complexity of \bs is $O(\sum_{v \in L(G)}\degree(v)^2)$ if starting from the layer $U(G)$, while the time complexity of \bsa is $O(min\{\sum_{u \in U(G)}\degree(u)^2, \sum_{v \in L(G)}\degree(v)^2\})$.


\section{Algorithm by Vertex Priority }
\label{sct:new}

\vspace{0.3cm}
\begin{figure}[hbt]
\begin{centering}
\includegraphics[trim=0 10 0 15,width=0.26\textwidth]{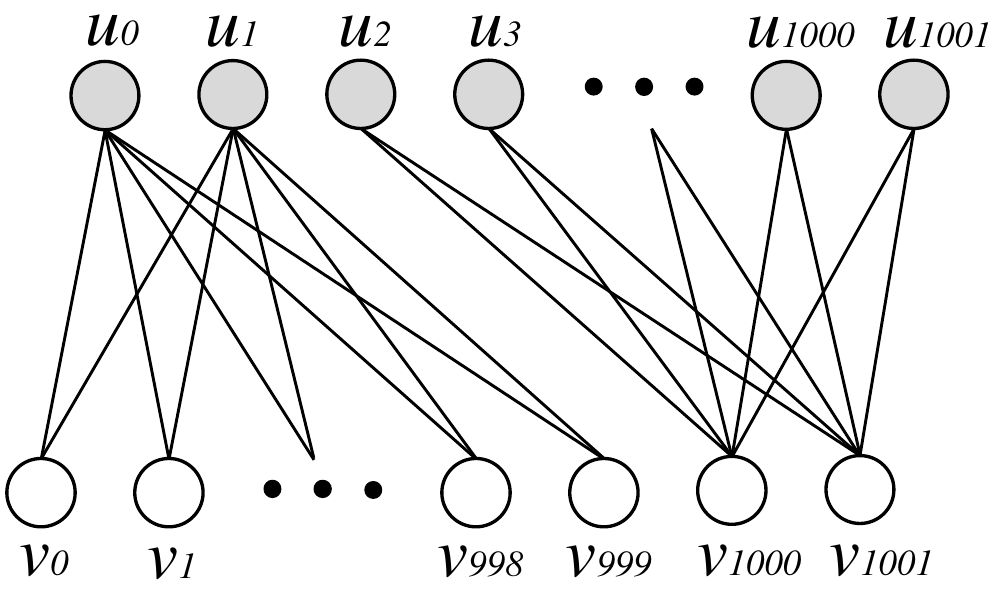}
\vspace*{-2mm}\caption{A bipartite graph containing hub vertices $u_0, u_1, v_{1000}$ and $v_{1001}$.}
\label{fig:hub_vertices}
\vspace*{-1mm}
\end{centering}
\end{figure}

In \bs and \bsa, the time complexity is related to the total number of $2$-hop neighbors visited (i.e.,the total number of wedges processed). When starting from one vertex layer (e.g., $U(G)$), the number of processed wedges is decided by the sum of degree squares of middle-vertices in the other layer (e.g., $\sum_{v \in L(G)}\degree(v)^2$). If all the vertices with lower-degrees are distributed in one vertex layer as middle-vertices, \bsa can just start from the vertices in the other layer and obtain a much lower computation cost.
However, when there are vertices with high-degrees (i.e., {\em hub vertices}) exist in both layers, which is not uncommon in real datasets (e.g., \texttt{Tracker} dataset), choosing which layer to start cannot achieve a better performance. For example, consider the graph $G$ with $2,002$ vertices and $4,000$ edges in Figure \ref{fig:hub_vertices}, where $u_0$ and $u_1$ are connected with $1,000$ vertices ($v_0$ to $v_{999}$),  $v_{1000}$ and $v_{1001}$ are also connected with $1,000$ vertices ($u_2$ to $u_{1001}$). In this example, choosing either of the two layers still needs to go through hub vertices,  $u_0, u_1 \in U(G)$ or $v_{1000}, v_{1001} \in L(G)$. 

\vspace{0.1cm}
\noindent
{\textbf{Optimization strategy.} Clearly, $[u_0, v_0, u_1, v_1]$ in Figure \ref{fig:hub_vertices} can be constructed in two ways: 1) by the wedges $(u_0, v_0, u_1)$ and $(u_0, v_1, u_1)$, or 2) by the wedges $(v_0, u_0, v_1)$ and $(v_0, u_1, v_1)$.  Consequently, a hub vertex (e.g., $u_0$ in Figure \ref{fig:hub_vertices}) may not always necessary to become a middle-vertex in a wedge for the construction of a butterfly. Thus, it is possible to design an algorithm which can avoid using hub vertices unnecessarily as middle-vertices. To achieve this objective, we introduce the \newf \xspace \new which runs in a vertex level (i.e., choosing which vertex to be processed as the start-vertex) rather than a layer level (i.e., choosing which vertex-layer to be processed as the start-layer). The time complexity of \new is $O(\sum_{(u, v) \in E(G)}min\{\degree(u), \degree(v)\})$.

Given a graph $G$, the \new algorithm first assigns a {\em priority} to each vertex $u \in V(G)$ which is defined as follows.

\begin{definition}[Priority]Given a bipartite graph $G(V, E)$, for a vertex $u \in V(G)$, the priority $p(u)$ is an integer where $p(u) \in [1, |V(G)|]$. For two vertices $u, v \in V(G)$, $p(u) > p(v)$ if\\
\vspace*{-2mm}
\begin{itemize}
\item $\degree(u) > \degree(v)$, or \\
\vspace*{-2mm}
\item $\degree(u) = \degree(v)$, $u.id > v.id$.\\
\end{itemize}
\label{definition:priority}
\end{definition}
\vspace*{-4mm}

Given the priority, a butterfly can always be constructed from two wedges $(u,v,w)$ and $(u,x,w)$ where the start-vertex $u$ has a higher priority than the middle-vertices $v$ and $x$. This is because we can always find a vertex which has the highest priority and connects to two vertices with lower priorities in a butterfly.


Based on the above observation, the \new algorithm can get all the butterflies by only processing the wedges where the priorities of start-vertices are higher than the priorities of middle-vertices. In this way, the algorithm \new will avoid processing
the wedges where middle-vertices have higher priorities than start-vertices (e.g.,
$(v_0, u_0, v_1)$ in Figure \ref{fig:hub_vertices}). In addition, in order to avoid duplicate counting, another constraint should also be satisfied in \new: \new only processes the wedges where start-vertices have higher priorities than end-vertices. To avoid processing unecessary wedges in the implementation, we sort the neighbors of vertices in ascending order of their priorities. Then we can early terminate the processing once we meet an end-vertex which has higher priority than the start-vertex (or meet a middle-vertex which has higher priority than the start-vertex).
The details of the \new algorithm are shown in Algorithm \ref{algo:new}. 

\begin{algorithm}[th]
\small
\DontPrintSemicolon
\KwIn{$G(V = (U, L), E)$: the input bipartite graph}
\KwOut{$\btf_G$} 
Compute $p(u)$ for each $u \in V(G)$  // Definition \ref{definition:priority} \;
Sort $N(u)$ for each $u \in V(G)$ according to their priorities\;
$\btf_G \gets 0$\;
\ForEach{ $u \in V(G)$ } {
    initialize hashmap $count\_wedge$ with zero\;
    \ForEach{$v \in \nb(u): p(v) < p(u)$} {
        \ForEach{$w \in \nb(v): p(w) < p(u)$} {
            $count\_wedge(w) \gets count\_wedge(w) + 1$\;
        }
    }
    \ForEach{$w \in count\_wedge$} {
        \If {$count\_wedge(w) > 1$} {
            $\btf_G \gets \btf_G + \binom{count\_wedge(w)}{2}$\;
        }
    }
}
\Return{$\btf_G$}\; 
\caption{{\sc \new}}
\label{algo:new}
\end{algorithm}

Given a bipartite graph $G$, \new first assigns a priority to each vertex $u \in V(G)$ according to Definition \ref{definition:priority} and sort the neighbors of $u$.
After that, \new processes the wedges from each start-vertex $u \in V(G)$ and initializes the hashmap $count\_wedge$ with zero. For each middle-vertex $v \in \nb(u)$, we process $v$ if $p(v) < p(u)$ according to the processing rule. Then, to avoid duplicate counting, we only process $w \in \nb(v)$ with $p(w) < p(u)$. After running lines 4 - 8, we get $|\nb(u) \cap \nb(w)|$ (i.e., $count\_wedge(w)$) for the start-vertex $u$ and the end-vertex $w$. Then, according to Lemma \ref{lemma:existing}, \new computes $\btf_G$.  Finally, we return $\btf_G$.

\noindent
{\textbf{Analysis of the \new algorithm.}
Below we show the correctness and the time complexity of \new.

\begin{figure}[htb]
\begin{centering}
\includegraphics[trim=0 0 0 0,width=0.42\textwidth]{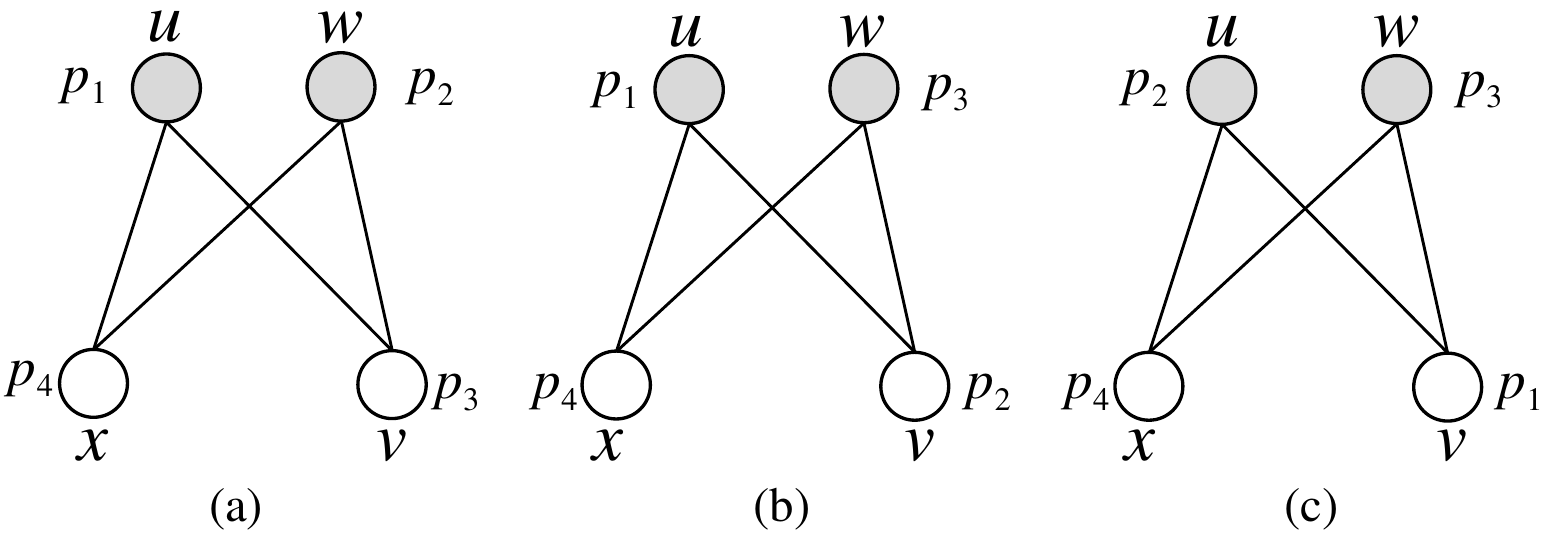}
\vspace*{-3mm}\caption{Assume $p_4 > p_3 > p_2 > p_1$}
\label{fig:proof}
\vspace*{-1mm}
\end{centering}
\end{figure}

\begin{theorem}
\label{theorem:newc}
The \new algorithm correctly solves the \btfc problem.
\end{theorem}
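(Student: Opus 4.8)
The plan is to show that the \new algorithm counts every butterfly in $G$ exactly once. By Lemma \ref{lemma:existing}, it suffices to argue that for each start-vertex $u$ processed in the outer loop, the quantity $count\_wedge(w)$ computed in lines 4--8 equals $|\nb(u)\cap\nb(w)|$ restricted to the set of vertices $w$ with $p(w)<p(u)$ that are reachable through a middle-vertex $v$ with $p(v)<p(u)$; and then that summing $\binom{count\_wedge(w)}{2}$ over all such $(u,w)$ pairs counts each butterfly exactly once. The key structural fact, already noted in the text, is that in any butterfly $[u_1,v_1,u_2,v_2]$ (abusing notation to mean the four vertices in cyclic order), there is a unique vertex of strictly highest priority among the four --- call it $a$ --- because the priority is a total order on $V(G)$ (ties in degree are broken by id). That vertex $a$ is adjacent in the butterfly to exactly the two vertices in the opposite layer, say $b$ and $c$, and both have priority lower than $a$. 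The fourth vertex $d$ (in the same layer as $a$) also has priority lower than $a$ by choice of $a$.

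First I would make the correspondence precise: a butterfly on vertices $\{a,b,c,d\}$ with $a$ the highest-priority vertex corresponds bijectively to an unordered pair of wedges $(a,b,d)$ and $(a,c,d)$ sharing start-vertex $a$ and end-vertex $d$, with distinct middle-vertices $b,c$ --- equivalently, to a choice of start-vertex $a$, an end-vertex $d$ with $p(d)<p(a)$, and a $2$-subset $\{b,c\}\subseteq \nb(a)\cap\nb(d)$. I would verify both directions: (i) given a butterfly, the described data is well-defined because $a$ is unique and $p(b),p(c),p(d)<p(a)$, so when the outer loop reaches $u=a$, both middle-vertices $b,c$ pass the test $p(v)<p(u)$ in line 6, and $d$ passes $p(w)<p(u)$ in line 7, so each of the two wedges increments $count\_wedge(d)$; (ii) conversely, any such data $(a,d,\{b,c\})$ determines a unique $4$-cycle which is a butterfly, and $a$ is automatically its highest-priority vertex, so distinct butterflies give distinct data.

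Next I would account for the arithmetic. Fix a start-vertex $u$ and an end-vertex $w$ with $p(w)<p(u)$. The loop on lines 6--8 increments $count\_wedge(w)$ exactly once for each $v\in \nb(u)\cap\nb(w)$ with $p(v)<p(u)$; but since $p(w)<p(u)$ and $u,w$ lie in the same layer while $v$ lies in the opposite layer, the middle vertices we want to exclude are only those $v$ with $p(v)\ge p(u)$, so after the loop $count\_wedge(w)=|\{v\in\nb(u)\cap\nb(w): p(v)<p(u)\}|$, which is exactly the number of valid middle-vertices. Then $\binom{count\_wedge(w)}{2}$ counts the number of $2$-subsets $\{b,c\}$ of that set, i.e.\ the number of butterflies whose highest-priority vertex is $u$ and whose "opposite" vertex (same layer as $u$) is $w$. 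Summing over all $w$ (line 9, with the $>1$ guard merely skipping zero contributions) and over all $u\in V(G)$ (line 4) therefore sums, over all butterflies, a contribution of $1$ from the unique outer-loop iteration $u=a$ and inner term $w=d$ corresponding to that butterfly. Hence the returned value equals $\btf_G$. I would also note the early-termination remark (neighbors sorted by priority, stopping once priority exceeds $p(u)$) does not change the set of processed wedges, only the running time, so it does not affect correctness.

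The main obstacle, and the place to be careful, is establishing the "exactly once" part rigorously: one must confirm that the map from butterflies to triples $(a,d,\{b,c\})$ is genuinely a bijection onto the set of triples the algorithm enumerates, with no butterfly double-counted and none missed. Double-counting would occur if a butterfly could be associated with two different start-vertices, which cannot happen because the highest-priority vertex is unique; missing a butterfly would occur if its highest-priority vertex $a$ did not "see" both of the other-layer vertices as valid middle-vertices, which cannot happen because they both have strictly smaller priority and are both adjacent to $a$ and to $d$. Making Figure \ref{fig:proof}'s case analysis (with $p_4>p_3>p_2>p_1$) into the general argument --- checking that regardless of which layer the top-priority vertex sits in, the roles of start/middle/end are forced --- is the crux; everything else is bookkeeping with binomial coefficients.
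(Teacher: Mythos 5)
Your proposal is correct and follows essentially the same route as the paper: both arguments hinge on the fact that every butterfly has a unique highest-priority vertex $a$ (priority being a total order), that the algorithm counts that butterfly exactly once via the two wedges with start-vertex $a$ and end-vertex the same-layer vertex $d$, and that it cannot be counted from any other start-vertex. Your write-up is in fact a more explicit rendering of the paper's argument --- you replace its three-case appeal to Figure~\ref{fig:proof} with a direct bijection between butterflies and triples $(a,d,\{b,c\})$ and spell out the $\binom{count\_wedge(w)}{2}$ bookkeeping, which the paper leaves implicit.
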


\begin{proof}
\label{proof:newc}
We prove that \new correctly computes $\btf_G$ for a bipartite graph $G$. A butterfly can always be constructed from two different wedges with the same start-vertex and the same end-vertex. Thus, we only need to prove that each butterfly in $G$ will be counted exactly once by \new. Given a butterfly $[x, u, v, w]$, we assume $x$ has the highest priority. The vertex priority distribution must be one of the three situations as shown in Figure \ref{fig:proof} (the other situations can be transformed into the above by a symmetric conversion), where $p_i$ is the corresponding vertex priority. Regarding the case in Figure \ref{fig:proof}(a), \ref{fig:proof}(b), or \ref{fig:proof}(c), \new only counts the butterfly $[x, u, v, w]$ once from the wedges $(x, u, v)$ and $(x, w, v)$. Thus, we can prove that \new correctly solves the \btfc problem. 
\end{proof}

\begin{theorem}
\label{theorem:newtc}
The time complexity of \new is $O(\sum_{(u, v) \in E(G)}min\{\degree(u), \degree(v)\})$.

\end{theorem}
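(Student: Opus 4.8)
The plan is to bound the total work of \new by summing the cost of the inner double loop (lines 6--8 of Algorithm \ref{algo:new}) over all start-vertices $u$, since the hashmap initialization and the final aggregation loop (lines 9--11) are dominated by this. First I would fix a start-vertex $u$ and observe that the work done with $u$ as start-vertex is proportional to the number of wedges $(u, v, w)$ processed, i.e. pairs $(v, w)$ with $v \in \nb(u)$, $w \in \nb(v)$, and both $p(v) < p(u)$ and $p(w) < p(u)$. A crude bound is $\sum_{v \in \nb(u),\, p(v) < p(u)} \degree(v)$, but this is not tight enough. Instead I would charge each processed wedge $(u, v, w)$ to its \emph{middle edge} $(u, v) \in E(G)$, so that the total cost is $O\big(\sum_{(u,v) \in E(G)} c_{u,v}\big)$ where $c_{u,v}$ is the number of valid end-vertices $w$ reachable via the wedge with middle edge $(u,v)$ and start-vertex $u$ (the one whose endpoint has the \emph{higher} priority plays the role of start-vertex).

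The key step is to show $c_{u,v} \le \min\{\degree(u), \degree(v)\}$ for the correct orientation of the edge. Assume WLOG $p(u) > p(v)$, so $u$ is the start-vertex and $v$ the middle-vertex. Then $c_{u,v}$ counts vertices $w \in \nb(v)$ with $p(w) < p(u)$; trivially $c_{u,v} \le |\nb(v)| = \degree(v)$. To also get $c_{u,v} \le \degree(u)$, I would use the priority definition: since every such $w$ satisfies $p(w) < p(u)$, and priority is (weakly) monotone in degree with ties broken by id, $p(w) < p(u)$ implies $\degree(w) \le \degree(u)$. That alone does not immediately bound the \emph{count} of such $w$ by $\degree(u)$, so the cleaner argument is: the number of $w \in \nb(v)$ with $p(w) < p(u)$ is at most the number of vertices of priority less than $p(u)$, but more usefully, we only ever traverse, from middle-vertex $v$, a prefix of $\nb(v)$ in priority order up to the point where priorities exceed $p(u)$ — and since $u \in \nb(v)$ itself (as $(u,v) \in E$) has priority $p(u)$, the traversed prefix excludes $u$ and all higher-priority neighbors of $v$. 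Hence $c_{u,v} \le \degree(v) - 1 < \degree(v)$, and combined with the observation that each start-vertex $u$ contributes at most $\degree(u)$ distinct middle-vertices (one per neighbor) each giving rise to an inner loop, one obtains the symmetric bound. Summing, $\sum_{(u,v)\in E(G)} \min\{\degree(u),\degree(v)\}$ upper-bounds the wedge count.

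The main obstacle I anticipate is making the ``$\min$'' appear rigorously rather than just one of the two terms: the naive charging gives $\sum_{(u,v)} \degree(v)$ where $v$ is the middle-vertex, i.e. the \emph{lower}-priority endpoint, which is already $\le \sum_{(u,v)} \min\{\degree(u),\degree(v)\}$ because lower priority forces $\degree(v) \le \degree(u)$ (ties in degree are broken by id, so $p(v) < p(u)$ with $\degree(v) = \degree(u)$ is consistent with $\degree(v) = \min$). So in fact the bound $\degree(\text{middle}) \le \min\{\degree(u),\degree(v)\}$ is immediate from Definition \ref{definition:priority}, and the real content is just (i) each wedge is charged to exactly one middle edge with a fixed orientation, and (ii) the per-start-vertex bookkeeping (hashmap init and aggregation) is $O(1 + \#\text{wedges from } u)$ amortized. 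I would close by noting the preprocessing — computing priorities and sorting each $\nb(u)$ — costs $O(m \log m)$ (or $O(m)$ with counting sort on degrees), which is subsumed, and conclude the stated $O\big(\sum_{(u,v)\in E(G)} \min\{\degree(u),\degree(v)\}\big)$ time complexity.
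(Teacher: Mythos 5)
Your proposal is correct and follows essentially the same route as the paper: charge each processed wedge to its start--middle edge, note that the middle-vertex is the lower-priority endpoint and hence (by Definition \ref{definition:priority}) has degree at most that of the start-vertex, so each edge $(u,v)$ accounts for at most $\min\{\degree(u),\degree(v)\}$ wedges, with the $O(n+m)$ preprocessing subsumed since the sum is at least $m$. The detour in your second paragraph trying to separately establish $c_{u,v}\le\degree(u)$ is unnecessary --- the closing observation that $\degree(\text{middle})=\min\{\degree(u),\degree(v)\}$ is the whole argument, exactly as in the paper.
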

\begin{proof}
The Algorithm \ref{algo:new} has two phases: initializing in the first phase and computing $\btf_G$ in the second phase. The time complexity of the first phase is $O(n + m)$. Firstly, we need $O(m)$ to get the degrees of vertices and $O(n)$ time to get the priorities by sorting the vertices using bin sort \cite{khaouid2015k}. Secondly, we need $O(m)$ time to sort the neighbors of vertices in ascending order of their priorities. To achieve this, we generate a new empty neighbor list $T(u)$ for each vertex $u$. Then we process the vertex with lower priority first and for each vertex $u$ and its neighbor $v$, we put $u$ into $T(v)$. Finally, the neighbors of vertices are ordered in $T$. The time cost of the second phase is related to the number of processed wedges and each wedge needs $O(1)$ time to process. In \new, we only need to process the wedges where the degrees of middle-vertices are lower or equal than the degrees of start-vertices based on the processing rule of \new and Definition \ref{definition:priority}. Considering an edge $(u, v) \in E(G)$ connecting a start-vertex $u$ and a middle-vertex $v$, \new needs to process $O(\degree(v))$ end-vertices from $(u, v)$. That is, for each edge $(u, v) \in E(G)$, \new needs to process $O(min\{\degree(u), \degree(v)\})$ wedges since the middle-vertex has a lower or equal degree than the start-vertex in a processed wedge. In total, \new needs to process $O(\sum_{(u, v) \in E(G)}min\{\degree(u), \degree(v)\})$ wedges. Therefore, the time complexity of \new is $O(\sum_{(u, v) \in E(G)}min\{\degree(u), \degree(v)\})$.
\end{proof}

\begin{theorem}
\label{theorem:newsc}
The space complexity of \new is $O(m)$.
\end{theorem}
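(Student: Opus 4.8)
The plan is to account for every piece of memory the \new algorithm holds at any point in time and show each is bounded by $O(m)$. There are essentially three contributors: (1) the input graph representation together with the reordered adjacency lists; (2) the priority array and any auxiliary arrays used during the preprocessing/sorting phase; and (3) the $count\_wedge$ hashmap used in the main loop.

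\textbf{Step 1: storage of $G$ and the sorted neighbor lists.} The bipartite graph $G$ stored as adjacency lists occupies $O(n + m)$ space, and since we assume $m > n$ this is $O(m)$. The sorting of $N(u)$ for each $u$ (as described in the proof of Theorem~\ref{theorem:newtc}, via the auxiliary lists $T(u)$) builds a second copy of the adjacency structure, which is again $O(m)$; once $T$ is built the old lists can be discarded, so this does not change the asymptotic bound.

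\textbf{Step 2: priorities and bin sort.} Computing $p(u)$ for all $u$ requires the degree array of size $O(n)$ and the bin-sort buckets, which together use $O(n)$ space (bin sort over degree values in $[0, n]$ needs $O(n)$ buckets plus $O(n)$ output). All of this is $O(n) \subseteq O(m)$.

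\textbf{Step 3: the $count\_wedge$ hashmap — the main point.} In each iteration of the outer loop over a start-vertex $u$, the hashmap $count\_wedge$ receives one entry per distinct end-vertex $w \in \thop(u)$ reached through some middle-vertex $v \in N(u)$ with $p(v) < p(u)$. The number of such distinct $w$ is at most $\sum_{v \in N(u)} \degree(v)$, but more crudely it is at most $|\thop(u)| \le m$ (indeed at most $n$), so the hashmap is $O(m)$ at any instant; and because it is re-initialized at the start of each outer iteration, only one such hashmap is ever live. Hence the total space is $O(m)$.

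The only mild subtlety — and the step I would be most careful about — is the claim that re-initializing the hashmap to zero at the start of each outer iteration costs only the space already accounted for (and does not, e.g., force us to keep an $O(n)$-entry dense array around permanently); using a hashmap rather than a dense array is precisely what keeps the per-iteration footprint proportional to the wedges actually processed, and one must note that clearing it is done by discarding entries rather than scanning an $n$-sized structure, so no extra persistent storage beyond $O(m)$ is introduced. Putting the three contributions together, every data structure used by \new is simultaneously bounded by $O(m)$, which establishes the claim.
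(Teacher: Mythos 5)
Your accounting is correct: the adjacency lists (including the re-sorted copies $T(u)$) take $O(n+m)=O(m)$ space, the degree/priority arrays and bin-sort buckets take $O(n)$, and at any instant the single live $count\_wedge$ hashmap holds at most one entry per distinct end-vertex, hence $O(n)\subseteq O(m)$ entries. The paper simply declares this theorem immediate, and your argument is the standard expansion of that observation, so there is nothing further to reconcile.
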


\begin{proof}
This theorem is immediate.
\end{proof}

\begin{lemma}
\label{lemma:tmc}
Given a bipartite graph $G$, we have the following equation:\\
\begin{align}
\label{eq:tmc}
\sum_{(u, v) \in E(G)}min\{\degree(u), \degree(v)\} \leq \nonumber \\
min\{\sum_{u \in U(G)}\degree(u)^2, \sum_{v \in L(G)}\degree(v)^2\}
\end{align}
The equality happens if and only if one of the following two conditions is satisfied: (1) for every edge $(u, v) \in E(G)$ and $u \in U(G)$, $\degree(u) \leq \degree(v)$; (2) for every edge $(u, v) \in E(G)$ and $u \in U(G)$, $\degree(v) \leq \degree(u)$.
\end{lemma}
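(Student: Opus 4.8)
The plan is to prove the inequality by bounding the left-hand side separately against \emph{each} of the two quantities inside the $\min$ on the right, and then to extract the equality condition from exactly when each of those two bounds is tight. The crucial observation is a double-counting identity: since $E(G)\subseteq U(G)\times L(G)$, every edge has exactly one endpoint in $U(G)$ and one in $L(G)$, so writing a generic edge as $(u,v)$ with $u\in U(G)$ and $v\in L(G)$, a fixed vertex $u\in U(G)$ is the $U$-endpoint of precisely $\degree(u)$ edges; hence $\sum_{(u,v)\in E(G)}\degree(u)=\sum_{u\in U(G)}\degree(u)^2$, and symmetrically $\sum_{(u,v)\in E(G)}\degree(v)=\sum_{v\in L(G)}\degree(v)^2$.

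With this in hand, the inequality is immediate: for every edge $(u,v)$ we have $\min\{\degree(u),\degree(v)\}\le\degree(u)$ and $\min\{\degree(u),\degree(v)\}\le\degree(v)$, so summing over all edges and applying the two identities gives $\sum_{(u,v)\in E(G)}\min\{\degree(u),\degree(v)\}\le\sum_{u\in U(G)}\degree(u)^2$ and, likewise, $\le\sum_{v\in L(G)}\degree(v)^2$. Taking the smaller of these two upper bounds yields Equation~\eqref{eq:tmc}.

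For the equality characterization, I would first note that $\min\{A,B\}$ is always equal to $A$ or to $B$, and since the left-hand side $L$ satisfies $L\le A$ and $L\le B$, we have $L=\min\{A,B\}$ if and only if $L=A$ or $L=B$. Now $L=\sum_{u\in U(G)}\degree(u)^2$ holds if and only if the termwise inequality $\min\{\degree(u),\degree(v)\}\le\degree(u)$ is an equality for every edge, i.e.\ $\degree(u)\le\degree(v)$ for every $(u,v)\in E(G)$ with $u\in U(G)$ — this is condition~(1); and symmetrically $L=\sum_{v\in L(G)}\degree(v)^2$ is equivalent to condition~(2). There is no real obstacle here; the only point that must be stated with care is the last logical equivalence about $\min\{A,B\}$, which is what makes the disjunction of (1) and (2) — rather than some more delicate combined condition — the correct answer.
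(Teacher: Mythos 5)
Your proof is correct and follows essentially the same route as the paper: the double-counting identity $\sum_{(u,v)\in E(G)}\degree(u)=\sum_{u\in U(G)}\degree(u)^2$ combined with the termwise bound $\min\{\degree(u),\degree(v)\}\le\degree(u)$ (and its symmetric counterpart) is exactly the paper's argument for the inequality. For the equality characterization the paper merely asserts it ``can be easily proved by contradiction'' and omits the details, whereas you supply a clean direct argument via the observation that $L=\min\{A,B\}$ iff $L=A$ or $L=B$; this correctly fills the gap the paper leaves open.
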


\begin{proof}
Given a bipartite graph $G$, since there are $\degree(u)$ edges attached to a vertex $u$, we can get that:

\begin{align}
\label{eq:tmcp1}
\sum_{u \in U(G)}\degree(u)^2 = \sum_{(u, v) \in E(G), u \in U(G)}\degree(u) \nonumber \\
\geq \sum_{(u, v) \in E(G)}min\{\degree(u), \degree(v)\}
\end{align}

Similarly,

\begin{align}
\label{eq:tmcp2}
\sum_{v \in L(G)}\degree(v)^2 = \sum_{(u, v) \in E(G), u \in U(G)}\degree(v) \nonumber \\
\geq \sum_{(u, v) \in E(G)}min\{\degree(u), \degree(v)\}
\end{align}

Thus, we can prove that Equation \ref{eq:tmc} holds. The condition of equality can be easily proved by contradiction which is omitted here.
\end{proof}

From Lemma \ref{lemma:tmc}, we can get that \new improves the time complexity of \bsa. Now we illustrate how \new efficiently handles the hub-vertices compared with \bsa using the following example.

\begin{example}
\label{exp:new}
Consider the bipartite graph $G$ in Figure \ref{fig:hub_vertices}.

\new first assigns a priority to each vertex in $G$ where $p(u_1) > p(u_0) > p(v_{1001}) > p(v_{1000}) > p(u_{1001}) > p(u_{1000}) >... > p(v_1) > p(v_0)$. Starting from $u_1$, \new needs to process $1,000$ wedges ending at $u_0$. Similarly, starting from $v_{1001}$, \new needs to process $1,000$ wedges ending at $v_{1000}$. No other wedges need to be processed by \new. In total, \new needs to process 2,000 wedges.

\bsa processes each vertex $u \in U(G)$ as start-vertex. Starting from $u_0$, \bsa needs to process 1,000 wedges ending at $u_1$. Starting from $u_1$, no wedges need to be processed. In addition, starting from the vertices in $\{u_2, u_3, ..., u_{1001}\}$, \bsa needs to process $999,000$ wedges. In total, \bsa needs to process $1,000,000$ wedges.



\end{example}

\section{Cache-aware Techniques}
\label{sct:newa}


As discussed in Section \ref{sct:introduction}, below is the breakdown of memory accesses to vertices required when processing the wedges: $O(n)$ accesses of start-vertices, $O(m)$ accesses of middle-vertices, and $O(\sum_{(u, v) \in E(G)}min\{\degree(u), \degree(v)\})$ accesses of end-vertices. Thus, the total access of end-vertices is dominant. For example, by running the \new algorithm on \texttt{Tracker} dataset, there are about $6 \times 10^ 9$ accesses of end-vertices while the accesses of start-vertices and middle-vertices are only $4 \times 10^ 7$ and $2 \times 10^ 8$, respectively. Since the cache miss latency takes a big part of the memory access time \cite{ailamaki1999dbmss}, we try to improve the CPU cache performance when accessing the end-vertices.
\begin{figure}[ht]
\begin{centering}
\hspace*{-1mm}
\includegraphics[trim=0 10 10 -10,width=0.45\textwidth]{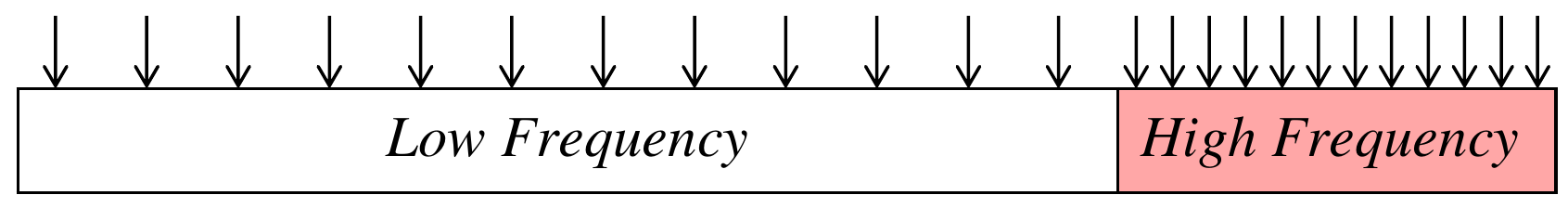}
\vspace*{-1mm}\caption{The buffer {\em BF}}
\label{fig:latency}
\vspace*{-1mm}
\end{centering}
\end{figure}

Because the CPU cache is hard to control in algorithms, a general approach to improve the CPU cache performance is storing frequently accessed vertices together. Suppose there is a buffer {\em BF} and {\em BF}  is partitioned into a {\em low-frequency area LFA} and a {\em high-frequency area  HFA} as shown in Figure \ref{fig:latency}. The vertices are stored in {\em BF} and only a limited number of vertices are stored in {\em HFA}.  For an access of the end-vertex $w$, we compute $miss(w)$ by the following equation:

\begin{equation}
miss(w)=\begin{cases}
1, & \text{iff.}\ w \in LFA, \\
0, & \text{iff.}\ w \in HFA.
\end{cases}
\end{equation}

We want to minimize $F$ which is computed by:

\begin{equation}
\label{equation:hit}
F = \sum_{(u,v,w) \in W}miss(w)\\
\end{equation}

Here, $W$ is the set of processed wedges of an algorithm.

Since $F$ can only be derived after finishing the algorithm, the minimum value of $F$ cannot be pre-computed. We present two strategies which aim to decrease $F$:
\begin{itemize}
\item
Cache-aware wedge processing which performs more high-priority vertices as end-vertices, while retaining the total number of accesses of end-vertices (thus, the same time complexity of \new). Doing this will enhance the access locality. \\
\vspace*{-2mm}
\item
Cache-aware graph projection which stores high-priority vertices together in {\em HFA}. \\
\vspace*{-2mm}

\end{itemize}

\subsection{Cache-aware Wedge Processing}
\label{sct:newaw}
\begin{figure}[hbt]
\begin{centering}
\subfigure[\new]{
\includegraphics[trim=0 0 0 0,clip,width=0.22\textwidth]{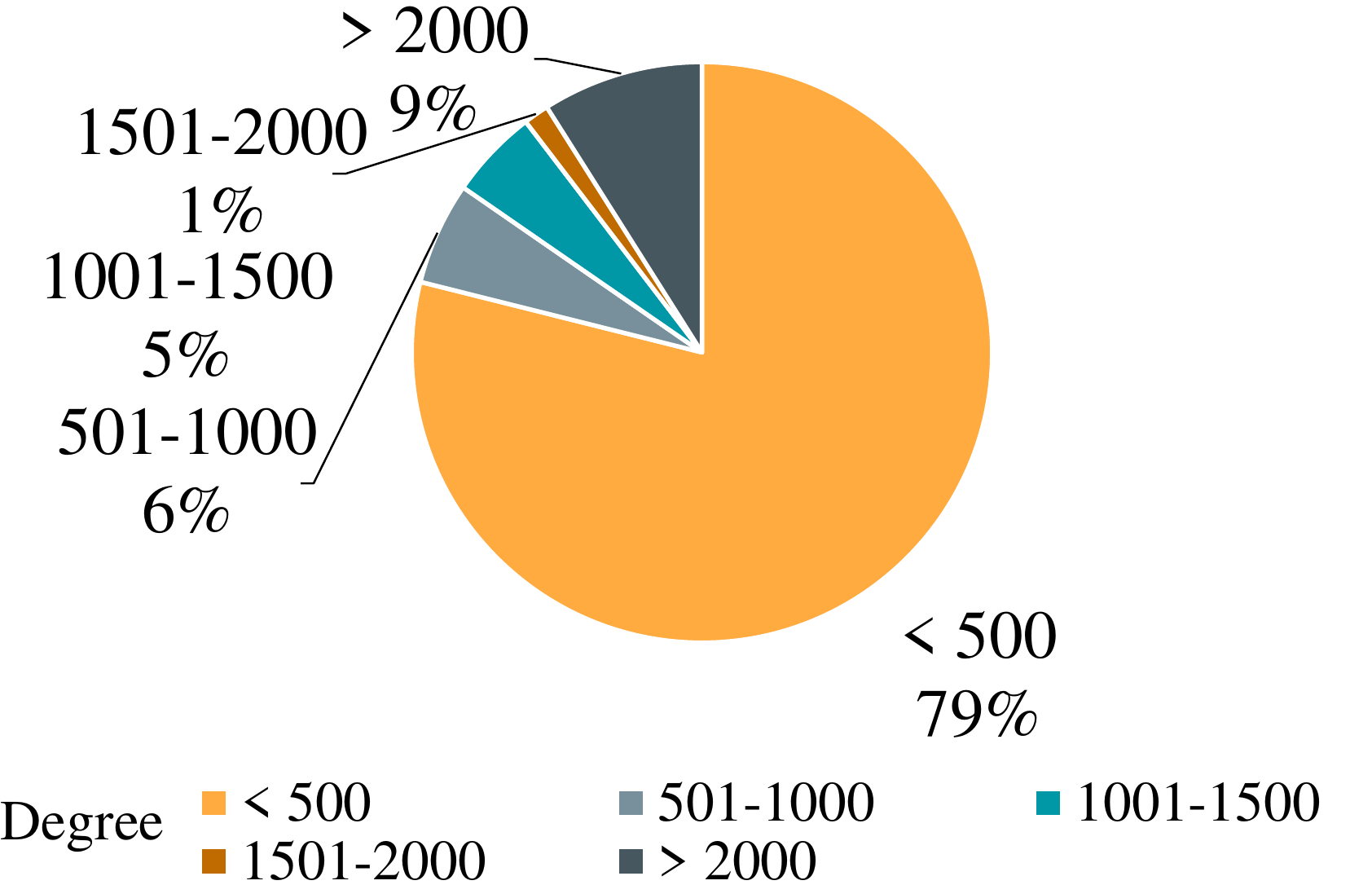}
\label{fig:end1}
}
\vspace{4mm}\subfigure[\newaw]{
\includegraphics[trim=0 0 0 0,clip,width=0.22\textwidth]{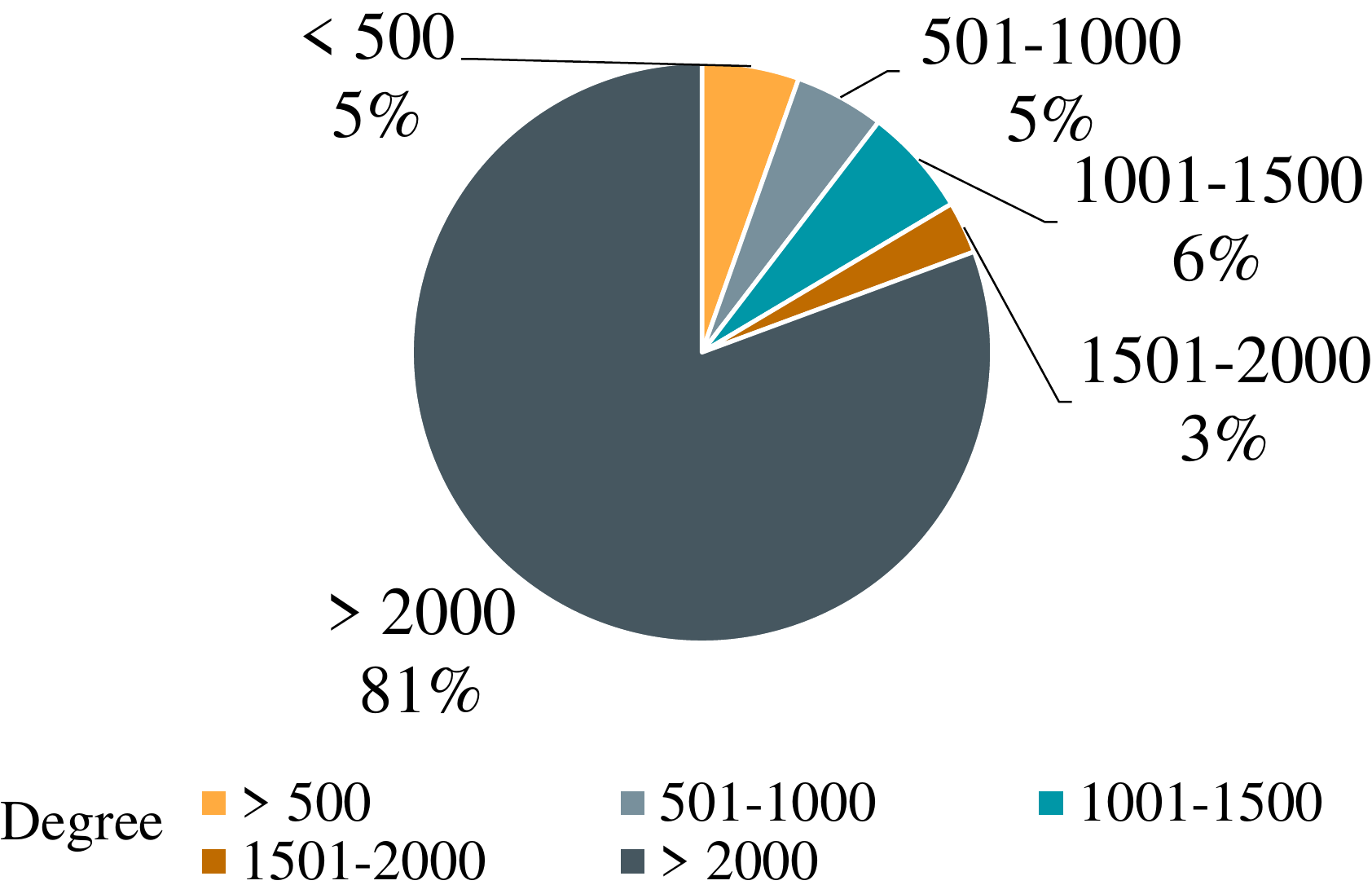}
\label{fig:end2}
}
\vspace*{-8mm}\caption{The degree distribution of the end-vertex-accesses on \texttt{Tracker}}
\label{fig:end}
\end{centering}
\end{figure}
%

\noindent
{\textbf{Issues in wedge processing of \new.}} In \new, the processing rule restricts the priorities of end-vertices should be lower than the priorities of start-vertices in the processed wedges. Because of that, the accesses of end-vertices exhibit bad locality (i.e., not clustered in memory). For example, by counting the accesses of end-vertices over \texttt{Tracker} dataset, as shown in Figure \ref{fig:end1}, $79\%$ of total accesses are accesses of low-degree vertices (i.e., degree $< 500$) while the percentage of high-degree vertices (i.e., degree $> 2000$) accesses is only  $9\%$ in \new. Since the locality of accesses is a key aspect of improving the CPU cache performance, we explore whether the locality of end-vertex-accesses can be improved. With the total access of end-vertices remaining unchanged, we hope the algorithm can access more high-degree vertices as end-vertices. In that manner, the algorithm will have more chance to request the same memory location repeatedly and the accesses of {\em HFA} is more possible to increase (i.e., $F$ is more possible to decrease).

\vspace{0.1cm}
\noindent
{\textbf{New wedge processing strategy.}}
Based on the above observation, we present a new wedge processing strategy: processing the wedges where the priorities of end-vertices are higher than the priorities of middle-vertices and start-vertices. We name the algorithm using this new strategy as \newaw. \newaw will perform more high-priority vertices as the end-vertices than \new because of the restriction of priorities of end-vertices. For example, considering the graph in Figure \ref{fig:ideas}(b), we have $p(v_0) > p(v_3) > p(u_0) > p(v_2) > p(v_1)$ according to their degrees. We analyse the processed wedges starting from $v_0$ to $v_3$, going through $u_0$. \new needs to process 5 wedges (i.e., $(v_0, u_0, v_1)$, $(v_0, u_0, v_2)$, $(v_0, u_0, v_3)$, $(v_3, u_0, v_1)$ and $(v_3, u_0, v_2)$) and 3 vertices (i.e., $v_1$, $v_2$ and $v_3$) are performed as end-vertices. Utilizing the new wedge processing strategy, in Figure \ref{fig:ideas}(b), the number of processed wedges of \newaw is still 5 (i.e., $(v_1, u_0, v_0)$, $(v_1, u_0, v_3)$, $(v_2, u_0, v_0)$, $(v_2, u_0, v_3)$ and $(v_3, u_0, v_0)$) but only 2 vertices with high-priorities (i.e., $v_0$ and $v_3$) are performed as end-vertices. Thus, the number of accessing different end-vertices is decreased from 3 to 2 (i.e., the accesses exhibit better locality). Also as shown in Figure \ref{fig:end}(b), after applying the new wedge processing strategy, the percentage of accesses of high-degree vertices (i.e., degree $> 2000$) increases from 9\% to 81\% on \texttt{Tracker} dataset.

\vspace{0.1cm}
\noindent
{\textbf{Time complexity unchanged.}}
Although the new wedge processing strategy can improve the CPU cache performance of \new, there are two questions arise naturally: (1) whether the number of processed wedges is still the same as \new; (2) whether the time complexity is still the same as \new after utilizing the new wedge processing strategy. We denote the set of processed wedges of \new as $W_{vp}$ and the set of processed wedges of \newaw as $W_{vp^+}$, we have the following lemma.

\begin{lemma}
\label{lemma:wedge}
$|W_{vp}| = |W_{vp^+}|$.
\end{lemma}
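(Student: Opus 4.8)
The plan is to establish a bijection between $W_{vp}$ and $W_{vp^+}$ by pairing up the wedges that collaborate to form the same butterfly. The key observation driving both algorithms is the one already used in the proof of Theorem~\ref{theorem:newc}: every butterfly $[x, u, v, w]$ has a unique vertex of highest priority, say $x$, and a unique vertex of lowest priority, say $v$, among the four; moreover the two ``middle'' vertices $u$ and $w$ are the remaining two, and they are the ones joined to both $x$ and $v$ by edges.

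First I would make the counting precise on the \new side. In \new, the butterfly $[x,u,v,w]$ is counted exactly once, via the two wedges $(x,u,v)$ and $(x,w,v)$ having start-vertex $x$ (the highest-priority vertex), common end-vertex $v$, and middle-vertices $u,w$ (which both have priority below $p(x)$, by the choice of $x$). Conversely, every wedge in $W_{vp}$ — namely a wedge $(a,b,c)$ with $p(b)<p(a)$ and $p(c)<p(a)$ — together with the other wedges sharing its start- and end-vertex participates in exactly the butterflies whose highest-priority vertex is $a$. So I would set up the map that sends the pair of \new-wedges $\{(x,u,v),(x,w,v)\}$ for the butterfly $[x,u,v,w]$ to the pair of \newaw-wedges $\{(u,?,x),(w,?,x)\}$ — wait, more carefully: in \newaw the processing rule requires end-vertex priority to exceed both middle- and start-vertex priority, so for the same butterfly the role of ``common endpoint shared by two wedges'' is played by the highest-priority vertex $x$, and the two wedges are $(u,v,x)$ and $(w,v,x)$, with common middle-vertex $v$ (the lowest-priority vertex) and start-vertices $u,w$. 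Indeed here $p(v) < p(u) < p(x)$ and $p(v) < p(w) < p(x)$, so both wedges satisfy the \newaw rule, and no other butterfly produces these wedges (the start- and end-vertices pin down that $\{u,v,w,x\}$ is the butterfly and that $x$ is its apex in \newaw's sense).

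The main step is then to argue this correspondence is a bijection wedge-by-wedge rather than merely butterfly-by-butterfly, i.e., that each individual wedge in $W_{vp}$ occurs with the same multiplicity as its image in $W_{vp^+}$, so that summing gives $|W_{vp}| = |W_{vp^+}|$. One clean way: for each algorithm, count $|W|$ as a sum over edges. For \new, each processed wedge $(a,b,c)$ is determined by choosing an edge $(a,b)$ with $p(b)<p(a)$ and then a neighbor $c$ of $b$ with $p(c)<p(a)$; this is the count already tallied in the proof of Theorem~\ref{theorem:newtc}. For \newaw, each processed wedge $(a,b,c)$ is determined by choosing an edge $(b,c)$ with $p(b)<p(c)$ (middle below end) and then a neighbor $a$ of $b$ with $p(a)<p(c)$ (start below end). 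Relabelling, a \newaw-wedge is a triple of vertices $a,b,c$ with $b$ the smallest priority, $c$ the largest, $b$ adjacent to both $a$ and $c$; a \new-wedge is a triple with $a$ the largest priority, $b$ adjacent to both $a$ and $c$, and $a$ adjacent\dots no: in \new, $b$ is the middle, adjacent to $a$ and $c$, and $a$ is the max. In both cases the structural data is: an unordered pair of vertices that are ``endpoints'' plus a ``center'' adjacent to both, with a priority condition singling out which endpoint is extremal. I would show both counts equal $\sum$ over paths $a\!-\!b\!-\!c$ (with $b$ adjacent to $a$ and to $c$) of the indicator that the center $b$ has priority below \emph{both} $a$ and $c$ — because ``$p(b)$ below the start and $p(c)$ below the start'' forces $p(b)$ to be the minimum of the three, and symmetrically ``$p(b)$ below the end and $p(a)$ below the end'' forces $p(b)$ to be the minimum — so the two sets of processed wedges are literally the same set of length-two paths, namely all paths whose midpoint has the strictly-lowest priority. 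Hence $|W_{vp}| = |W_{vp^+}|$.

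The hard part will be handling the tie-breaking and the direction of the inequalities cleanly: Definition~\ref{definition:priority} makes all priorities distinct, which is what lets ``lower priority than both neighbors'' equal ``is the minimum of the three,'' so the argument goes through, but I should state explicitly that the \newaw rule (end-priority above middle- and start-priority) combined with the observation that in any length-two path at most one endpoint can have priority below the center is what collapses the two descriptions. A secondary subtlety is making sure we have not conflated ``wedge as ordered triple'' with ``wedge as path'': a path $a\!-\!b\!-\!c$ gives two ordered wedges $(a,b,c)$ and $(c,b,a)$, but exactly one of them is processed by each algorithm (the one with the correct extremal endpoint playing start or end), so the per-path count is $1$ in both, and the bijection is in fact the identity on the underlying set of midpoint-minimal paths.
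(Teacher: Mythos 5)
Your high-level plan is sound and, once the detours are stripped away, it is essentially the paper's own argument: the paper simply observes that reversing a wedge, $(u,v,w)\mapsto(w,v,u)$, turns the \new membership condition ($p(v)<p(u)$ and $p(w)<p(u)$, i.e.\ the start-vertex is the priority-maximum of the three) into the \newaw membership condition (the end-vertex is the maximum), and since priorities are distinct and identical in both algorithms this reversal is a bijection from $W_{vp}$ to $W_{vp^+}$. Your ``per-path'' reformulation --- each underlying length-two path contributes exactly one ordered wedge to each set --- is that same bijection stated path-by-path.

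However, the concrete characterization you use to close the argument is wrong, and the two claims supporting it are false. The \new rule bounds the middle and end priorities by the \emph{start} priority; this forces the start to be the maximum of the three, not the middle to be the minimum. A path with priorities $3$--$2$--$1$ (start--middle--end) is processed by \new even though its midpoint is the median, so ``both counts equal the number of midpoint-minimal paths'' already fails for $W_{vp}$. Likewise, ``in any length-two path at most one endpoint can have priority below the center'' is false: both endpoints lie below the center exactly when the center is the maximum, and those are precisely the paths that \emph{neither} algorithm processes. The correct common description is that a path $a$--$b$--$c$ contributes exactly one ordered wedge to each of $W_{vp}$ and $W_{vp^+}$ if and only if its center $b$ is \emph{not} the highest-priority vertex of the three, and zero otherwise; with that substitution your count goes through and collapses to the one-line reversal bijection above. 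A secondary issue: your opening butterfly-based pairing tacitly assumes the highest- and lowest-priority vertices of a butterfly lie in the same layer (so that the other two are ``joined to both''), which need not hold, and the \newaw wedges you write for that butterfly, $(u,v,x)$ and $(w,v,x)$, are not paths in that configuration since $v$ and $x$ are then in the same layer; but since you discard that route in favor of the per-path count, it does not affect the final argument.
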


\begin{proof}
For a wedge $(u, v, w) \in W_{vp}$, it always satisfies $p(u) > p(v)$ and $p(u) > p(w)$ according to Algorithm \ref{algo:new}. For a wedge $(u, v, w) \in W_{vp^+}$, it always satisfies $p(w) > p(v)$ and $p(w) > p(u)$ according to the new wedge processing strategy. In addition, every vertex $u \in G$ has a unique $p(u)$ and the new wedge processing strategy does not change $p(u)$ of $u$. Thus, for each wedge $(u, v, w) \in W_{vp}$, we can always find a wedge $(w, v, u) \in W_{vp^+}$. Similarly, for each wedge $(u, v, w) \in W_{vp^+}$, we can always find a wedge $(w, v, u) \in W_{vp}$. Therefore, we prove that $|W_{vp}| = |W_{vp^+}|$.
\end{proof} 

Since no duplicate wedges are processed, based on the above lemma, \newaw will process the same number of wedges with \new. However, if only applying this strategy, when going through a middle-vertex, we need to check all its neighbors to find the end-vertices which have higher priorities than the middle vertex and the start-vertex. The time complexity will increase to $O(\sum_{u \in V(G), v \in \nb(u)}\degree(u)\degree(v))$ because each middle-vertex $v$ has $\degree(v)$ neighbors. In order to reduce the time complexity, for each vertex, we need to sort the neighbors in descending order of their priorities. After that, when dealing with a middle-vertex, we can early terminate the priority checking once we meet a neighbor which has a lower priority than the middle-vertex or the start-vertex. 

\subsection{Cache-aware Graph Projection}
\begin{figure}[!h]
\begin{centering}
\includegraphics[trim=0 10 10 -10,width=0.26\textwidth]{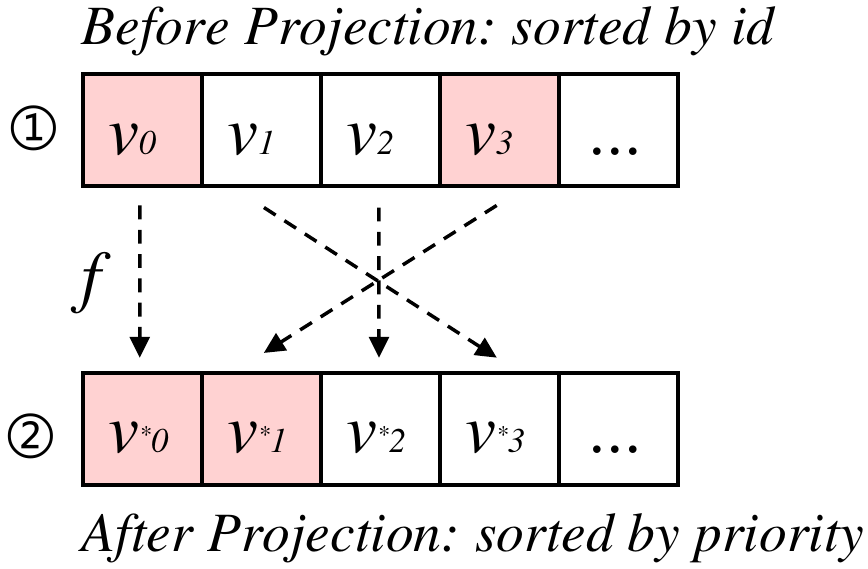}
\vspace*{-1mm}\caption{Illustrating the cache-aware graph projection}
\label{fig:projection}
\vspace*{-1mm}
\end{centering}
\end{figure}

\noindent
{\textbf{Motivation.}} After utilizing the cache-aware wedge processing strategy, end-vertices are mainly high-priority vertices. Generally, vertices are sorted by their ids when storing in the buffer. Figure \ref{fig:projection} shows accesses of the buffer when processing end-vertices (i.e., $v_0$ and $v_3$) starting from $v_0$ to $v_3$ and going through $u_0$ in Figure \ref{fig:ideas}(b) by \new. We can see that although end-vertices are mostly high-priority vertices, the distance between two end-vertices (e.g., $v_0$ and $v_3$) can be very long. This is because many low-priority vertices are stored in the middle of high-priority vertices. In addition, real graphs usually follow power-law distributions which do not contain too many vertices with high priorities (degrees). For example, in the \texttt{Tracker} dataset with about $40,000,000$ vertices, there are only $10,338$ vertices with degree $\geq 1000$, and only $1\%$  vertices ($400,000$) with degree $\geq 37$. Motivated by the above observations, we propose the graph projection strategy which can further improve the cache performance. 


\noindent
{\textbf{Graph projection strategy.}} The main idea of the graph projection strategy is projecting the given bipartite graph $G$ into a projection graph $G^*$ using a 1 to 1 bijective function $f$. The projection graph $G^*$ is defined as follows:

\begin{definition}[Projection Graph]Given a bipartite graph $G(V, E)$, a projection graph $G^*(V,E)$ is defined as:
$G^* \gets projection(G, f)$, where $f$ is a bijection from $E(G)$ to $E(G^*)$. For each $e = (u, v) \in E(G)$, $e^* = (u^*, v^*)= f(e)$ where $u^* \in U(G^*)$, $v^* \in L(G^*)$, and $u^*.id = rankU(u) + l$, $v^*.id = rankL(v)$. Here, $rankU(u)\in [0, r-1]$ ($rankL(v)\in [0, l-1]$) denotes the rank of the priority of $u \in U(G)$ (the rank of the priority of $v \in L(G)$).
\label{definition:projection}
\end{definition}

Unlike the conventional graph projection method in \cite{newman2001scientific1, newman2001scientific2} which projects a bipartite graph to a unipartite graph, our linear graph projection uses a 1 to 1 bijective function to relabel the vertex-IDs which does not change the graph structure. Thus, the number of vertices and edges are both unchanged after projecting. After projecting the original graph $G$ into the projection graph $G^*$, the vertices with high priorities will be stored together. In this manner, we can store more high-priority vertices consecutively in {\em HFA}. Figure \ref{fig:projection} illustrates the idea of graph projection using the example in Figure \ref{fig:ideas}(b).  After obtaining the projection graph $G^*$, we can see that the distance between two high-priority end-vertices becomes much shorter, e.g., the distance between $v^*_1$ and $v^*_2$ is 1 while the distance between $v_0$ and $v_3$ before projection is 3. In the experiments, we prove that the algorithms applying with the graph projection strategy achieves a much lower cache miss ratio than \new.

\subsection{Putting Cache-aware Strategies Together}
\label{sct:newaa}
\noindent
 {\textbf{The \newa algorithm.}} Putting the above strategies together, details of the algorithm \newa are shown in Algorithm \ref{algo:newa}. Given a bipartite graph $G$, \newa first generates a projection graph $\fg$ according to Definition \ref{definition:projection} and for each vertex $u^* \in V(\fg)$, we sort its neighbors. Then, \newa finds $\nbf(u^*)$ for each vertex $u^* \in V(\fg)$. For each vertex $v^* \in \nbf(u^*)$, we find $w^* \in \nbf(v^*)$ with $p(w^*) > p(u^*)$ and $p(w^*) > p(v^*)$ (lines 5 - 12).  After running lines 6 - 12, we get $|\nb(u^*) \cap \nb(w^*)|$ (i.e., $count\_wedge(w^*)$) for the start-vertex $u^*$ and the end-vertex $w^* \in \thop(u^*)$.  Finally, we compute $\btf_G$ (lines 13 - 15). 

\begin{algorithm}[th]
\small
\DontPrintSemicolon
\KwIn{$G(V = (U, L), E)$: the input bipartite graph}
\KwOut{$\btf_G$} 
$\btf_G \gets 0$\;
$\fg \gets projection(G, f)$ // Definition \ref{definition:projection} \;
compute $p(u^*)$ for each $u^* \in V(\fg)$  // Definition \ref{definition:priority} \;
sort $N(u^*)$ for each $u^* \in V(\fg)$ according to their priorities\;
\ForEach{$u^* \in V(\fg) $} {
    initialize hashmap $count\_wedge$ with zero\;
    \ForEach{$v^* \in \nbf(u^*)$} {
        \ForEach{$w^* \in \nbf(v^*): p(w^*) > p(u^*)$} {
            \If {$p(w^*) > p(v^*)$ } {
                $count\_wedge(w^*) \gets count\_wedge(w^*) + 1$\;
            }
            \Else {
                \Break\;
            }
        }
    }
    \ForEach{$w^* \in count\_wedge$} {
        \If {$count\_wedge(w^*) > 1$} {
            $\btf_G \gets \btf_G + \binom{count\_wedge(w^*)}{2}$\;
        }
    }
}
\Return{$\btf_G$}\; 
\caption{{\sc \newa}}
\label{algo:newa}
\end{algorithm}


\begin{theorem}
\label{theorem:newacor}
The \newa algorithm correctly solves the \btfc problem.
\end{theorem}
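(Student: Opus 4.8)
The plan is to treat \newa as two modifications of \new stacked on top of each other --- the linear graph projection $G \mapsto \fg$, and the ``reversed'' wedge rule that forces the end-vertex (rather than the start-vertex) to carry the highest priority --- and to argue that each modification preserves correctness. First I would dispose of the projection. By Definition~\ref{definition:projection}, $f$ assigns to every $u \in U(G)$ a single image $u^*$ and to every $v \in L(G)$ a single image $v^*$, and it maps $E(G)$ bijectively onto $E(\fg)$ while preserving incidence; hence the induced vertex map is a graph isomorphism $G \to \fg$. In particular it preserves all vertex degrees, $\btf_G = \btf_{\fg}$, and the butterflies of $G$ are in bijection with those of $\fg$. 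It therefore suffices to prove that the main loop of Algorithm~\ref{algo:newa}, run on $\fg$ with the priorities of Definition~\ref{definition:priority} taken over $\fg$, counts each butterfly of $\fg$ exactly once.

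For that I would mirror the proof of Theorem~\ref{theorem:newc} with the roles of start- and end-vertex interchanged. A butterfly $[x,u,v,w]$ has exactly two decompositions into a pair of wedges sharing both their start- and their end-vertex, one for each ``diagonal'' pair $\{u,w\}$, $\{v,x\}$. The rule of Algorithm~\ref{algo:newa} accepts a wedge $(a,b,c)$ only when $p(c) > p(a)$ and $p(c) > p(b)$, i.e. only when its end-vertex is the strict maximum of its three vertices. Let $z$ be the vertex of the butterfly of globally highest priority. In the decomposition whose shared end-vertex is $z$ (with shared start-vertex the diagonal partner of $z$), both wedges obey the rule because $z$ dominates every other vertex of the butterfly; in the other decomposition $z$ sits as a middle-vertex, so neither of its wedges can have a strict-maximum end-vertex and nothing is emitted. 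Thus each butterfly delivers its pair of wedges to exactly one $(u^*,w^*)$-bucket, namely $w^* = z$. Finally, for a fixed start $u^*$ the counter $count\_wedge(w^*)$ accumulates precisely the common neighbours $v^*$ of $u^*$ and $w^*$ with $p(v^*) < p(w^*)$; since moreover $p(u^*) < p(w^*)$, any two such $v^*$ complete $u^*$ and $w^*$ to a butterfly whose highest-priority vertex is $w^*$, and every butterfly with highest-priority vertex $w^*$ arises this way. Hence $\sum \binom{count\_wedge(w^*)}{2}$ counts each butterfly of $\fg$ exactly once --- the analogue of Lemma~\ref{lemma:existing} for this rule, and consistent with Lemma~\ref{lemma:wedge}.

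The detail I would treat most carefully is that the early \Break in Algorithm~\ref{algo:newa} does not silently drop a valid wedge: because $N(v^*)$ is stored in descending order of priority (as introduced for \newaw), once a neighbour $w^*$ with $p(w^*)\le p(v^*)$ (or, via the loop guard, $p(w^*)\le p(u^*)$) is encountered, every later neighbour has even lower priority, so no admissible end-vertex is missed. I do not expect a genuine obstacle here; the only two points needing attention are that the start/end role-swap together with the priority-restricted form of $count\_wedge(w^*)$ forces the counting argument to be re-derived in the spirit of Theorem~\ref{theorem:newc} rather than quoted verbatim, and --- if one wants to reuse \new-specific facts directly on $\fg$ --- that the relabelling of Definition~\ref{definition:projection} preserves the priority total order within each layer, which holds since ranks are assigned consistently with $p$ and every $U$-vertex keeps a larger id than every $L$-vertex both before and after projection.
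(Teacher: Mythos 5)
Your proposal is correct and follows essentially the same route as the paper's own proof: the projection is treated as a pure relabelling that preserves the butterfly structure, and each butterfly is shown to be counted exactly once via the two wedges whose shared end-vertex is its highest-priority vertex (with the other decomposition blocked because that vertex would have to serve as a middle-vertex). You simply supply more detail than the paper does --- in particular the justification that the early \Break{} drops no admissible wedge and the explicit $\binom{count\_wedge(w^*)}{2}$ accounting --- but the underlying argument is the same.
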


\begin{proof}
We prove that \newa correctly computes $\btf_G$ for a bipartite graph $G$. Since the graph projection strategy just renumbers the vertices, it does not affect the structure of $G$. Given a butterfly $[x, u, v, w]$, we assume $x$ has the highest priority. We only need to prove that \newa will count exactly once for each butterfly in Figure \ref{fig:proof}. Regarding the case in Figure \ref{fig:proof}(a), \ref{fig:proof}(b), or \ref{fig:proof}(c), \newa only counts the butterfly $[x, u, v, w]$ once from the wedges $(v, u, x)$ and $(v, w, x)$. Thus, we can get that the \newa algorithm correctly solves the \btfc problem.
\end{proof}

\begin{theorem}
\label{theorem:newatc}
The time complexity of \newa is $O(\sum_{(u, v) \in E(G)}min\{\degree(u), \degree(v)\})$.
\end{theorem}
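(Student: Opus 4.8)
The plan is to mirror the time-complexity analysis of \new (Theorem~\ref{theorem:newtc}) and show that the two cache-aware modifications in \newa---the graph projection and the new wedge processing strategy---do not affect the asymptotic cost. First I would argue that the preprocessing in lines 1--4 of Algorithm~\ref{algo:newa} costs $O(n+m)$: computing the projection $\fg$ via the bijection $f$ requires one pass over the edges together with a bin sort of the vertices by priority (as in the first phase of \new), and sorting $N(u^*)$ for all $u^*$ in descending priority order can again be done in $O(m)$ total by the bucket-style construction described for \new. Since the projection only relabels vertex IDs and preserves adjacency, $\degree$ is unchanged, so any bound stated in terms of degrees carries over verbatim.

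Next I would bound the cost of the main loop (lines 5--15). Each processed wedge costs $O(1)$, and by the \Break{} in line~12 the inner loop over $\nbf(v^*)$ touches only those neighbors $w^*$ with $p(w^*) > p(v^*)$ (the descending-priority ordering of the neighbor list makes the early termination valid), so no wedges beyond those in $W_{vp^+}$ are scanned. By Lemma~\ref{lemma:wedge} we have $|W_{vp^+}| = |W_{vp}|$, and by the analysis in the proof of Theorem~\ref{theorem:newtc} we know $|W_{vp}| = O(\sum_{(u, v) \in E(G)}min\{\degree(u), \degree(v)\})$. Finally, the aggregation over $count\_wedge$ in lines 13--15 is linear in the number of distinct end-vertices reached, which is dominated by the number of wedges. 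Combining the $O(n+m)$ preprocessing with the $O(\sum_{(u, v) \in E(G)}min\{\degree(u), \degree(v)\})$ main phase, and recalling $m > n$ and that $m \le \sum_{(u,v)\in E(G)} min\{\degree(u),\degree(v)\}$ whenever every vertex has degree at least one, gives the claimed bound.

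The only subtlety---and the step I expect to need the most care---is justifying that the inner loop in \newa genuinely visits only $O(\degree(v^*))$-many (indeed only the high-priority) neighbors of each middle-vertex $v^*$ before breaking, rather than degenerating to the $O(\sum_{u, v \in \nb(u)}\degree(u)\degree(v))$ cost flagged in Section~\ref{sct:newaw}. This hinges on two facts that must be stated explicitly: that the neighbor lists are sorted in \emph{descending} priority order (line~4), so the first neighbor $w^*$ with $p(w^*) \le p(v^*)$ correctly terminates the scan, and that an analogue of Lemma~\ref{lemma:wedge} guarantees the set of wedges actually traversed is exactly $W_{vp^+}$, whose size equals $|W_{vp}|$. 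Once these are in place the result is immediate, so the proof can be kept short by citing Lemma~\ref{lemma:wedge} and the degree-counting argument from Theorem~\ref{theorem:newtc} and simply remarking that projection preserves all degrees.
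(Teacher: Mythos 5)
Your proof is correct and follows essentially the same route as the paper's: an $O(n+m)$ preprocessing phase plus an $O(1)$-per-wedge main loop whose total wedge count is bounded by $\sum_{(u,v)\in E(G)}\min\{\degree(u),\degree(v)\}$. The only cosmetic difference is that you obtain the wedge bound by chaining Lemma~\ref{lemma:wedge} with the count already established in Theorem~\ref{theorem:newtc}, whereas the paper re-derives it directly by charging $O(\degree(v))$ wedges to each edge whose lower-degree endpoint serves as the middle-vertex; your explicit attention to the descending-priority neighbor ordering and the early \Break{} is a detail the paper's proof leaves implicit.
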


\begin{proof}
The Algorithm \ref{algo:newa} has two phases including the initialization phase and $\btf_G$ computation phase. In the first phase, similar with \new, the algorithm needs $O(n + m)$ time to compute the priority number, sort the neighbors of vertices and compute the projection graph. Secondly, since we can use $O(1)$ time to process one wedge, we analyse the number of processed wedges by \newa as follows. In \newa, we only need to process the wedges where the degree of end-vertex is higher or equal than the middle-vertex. Considering an edge $(u, v) \in E(G)$ connecting an end-vertex $u$ and a middle-vertex $v$, we need to process $O(\degree(v))$ wedges containing $(u, v)$. Thus, we need to process $O(\sum_{(u, v) \in E(G)}min\{\degree(u), \degree(v)\})$ wedges in total. Therefore, the time complexity of \newa is $O(\sum_{(u, v) \in E(G)}min\{\degree(u), \degree(v)\})$.
\end{proof}


\begin{theorem}
\label{theorem:newasc}
The space complexity of \newa is $O(m)$.
\end{theorem}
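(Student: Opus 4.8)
The space complexity of $\newa$ is $O(m)$.

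My plan is to account separately for every data structure that Algorithm~\ref{algo:newa} maintains and to show that each one occupies $O(m)$ space, so that their sum is also $O(m)$. First I would list the inputs and auxiliary structures: (i) the original graph $G$ stored in adjacency-list form, which takes $O(n+m)$ space; (ii) the projection graph $\fg = projection(G,f)$ together with the bijection $f$ (equivalently, the rank arrays $rankU$ and $rankL$ used in Definition~\ref{definition:projection}); (iii) the priority values $p(u^*)$ for all $u^* \in V(\fg)$; (iv) the sorted neighbor lists $N(u^*)$; and (v) the hashmap $count\_wedge$ used inside the main loop.

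The key observation, exactly as in the proof of Theorem~\ref{theorem:newsc} for $\new$, is that $m > n$ by the standing assumption in Section~\ref{sct:preliminaries}, so every $O(n+m)$ bound collapses to $O(m)$. For (i) and (ii): since the linear graph projection is a $1$-to-$1$ relabeling of vertex IDs, $\fg$ has the same number of vertices and edges as $G$, hence its adjacency-list representation is again $O(n+m) = O(m)$; the rank arrays have length $n$, contributing $O(n)$. For (iii) the priority array has length $n$, i.e.\ $O(n)$. For (iv), the sorted neighbor lists are just a reordering of the adjacency lists and their total length is $\sum_{u^* \in V(\fg)} \degree(u^*) = 2m = O(m)$; the temporary lists $T(u)$ built during the bucket-sort-style reordering described in the proof of Theorem~\ref{theorem:newtc} also have total size $O(m)$ and can be discarded afterwards. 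For (v), $count\_wedge$ is reinitialized for each start-vertex $u^*$ and at any moment holds only the distinct end-vertices $w^* \in \thop(u^*)$ reached from $u^*$, so its size is at most $\min\{|\thop(u^*)|, n\} = O(n)$ at any time, and in any case $O(m)$.

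Summing all contributions gives $O(m) + O(n) + O(n) + O(m) + O(n) = O(m)$, which proves the claim. I do not anticipate a genuine obstacle here; the only point requiring a little care is to confirm that the projection step does not blow up storage — but this follows immediately from the fact that $f$ is a bijection on edge sets, so no new edges or vertices are created. Accordingly the proof is short, and I would simply remark that it mirrors the argument for Theorem~\ref{theorem:newsc}.
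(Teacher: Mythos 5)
Your proof is correct and follows exactly the argument the paper intends: the paper's own proof simply declares the theorem ``immediate,'' and your account of the adjacency lists, projection arrays, priority array, and the per-start-vertex $count\_wedge$ hashmap, each bounded by $O(n+m)=O(m)$ since $m>n$, is just the explicit bookkeeping behind that claim. No gap; your version is merely more detailed than the paper's.
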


\begin{proof}
This theorem is immediate.
\end{proof}

\noindent
{{\bf Remark.}} The cache-aware strategies proposed in this section are not applicable for the algorithms \bs and \bsa. This is because these strategies are priority-based strategies while the algorithms \bs and \bsa are not priority-based algorithms. 

\section{Extensions}
\label{sct:extension}

In this section, firstly, we extend our algorithms to compute $\btf_e$ for each edge $e$ in $G$. Secondly, we extend our algorithms to parallel algorithms. Thirdly, we introduce the external memory butterfly counting algorithm to handle large graphs with limited memory size.

\subsection{Counting the Butterflies for Each Edge}
\label{sct:edge}


Given an edge $e$ in $G$, we have the following equation \cite{wang2014rectangle}:

\vspace*{-1mm}
\begin{align}
\label{eq:edge}
& \btf_{e=(u, v)} = \sum_{w \in \thop(u), w \in \nb(v)}(|\nb(u) \cap \nb(w)| - 1)  \nonumber \\
& = \sum_{x \in \thop(v), x \in \nb(u)}(|\nb(v) \cap \nb(x)| - 1)\\ \nonumber
\end{align}
\vspace*{-7mm}

Based on the above equation, our \newa algorithm can be extended to compute $\btf_e$ for each edge $e$ in $G$. In Algorithm \ref{algo:newa}, for a start-vertex $u^*$ and a valid end-vertex $w^* \in \thop(u)$, the value $|\nb(u^*) \cap \nb(w^*)|$ is already computed which can be used directly to compute $\btf_e$.

Here, we present the \newae algorithm to compute $\btf_e$. The details of \newae are shown in Algorithm \ref{algo:newae}. In the initialization process, we initialize $\btf_e$ for each edge $e \in E(G)$. Then, for each start-vertex $u^*$, we run Algorithm \ref{algo:newa} Line 6 - Line 12 to compute $|\nb(u^*) \cap \nb(w^*)|$. After that, we run another round of wedge processing and update $\btf_{e(u, v)}$, $\btf_{e(v, w)}$ according to Equation \ref{eq:edge} (lines 5 - 14). Finally, we return $\btf_e$ for each edge $e$ in $G$.

In Algorithm \ref{algo:newae}, we only need an extra array to store $\btf_e$ for each edge $e$. In addition, because it just runs the wedge processing procedure twice, the time complexity of \newae is also $O(\sum_{(u, v) \in E(G)}min\{\degree(u), \degree(v)\})$.

\begin{algorithm}[th]
\small
\DontPrintSemicolon
\KwIn{$G(V = (U, L), E)$: the input bipartite graph}
\KwOut{$\btf_e$ for each $e \in E(G)$}

run Algorithm \ref{algo:newa} Line 2 - Line 4\;
$\btf_e \gets 0$ for each $e \in E(G)$\;

\ForEach{vertex $u^* \in V(\fg)$} {
    run Algorithm \ref{algo:newa} Line 6 - Line 12\;

    \ForEach{$v^* \in \nbf(u^*)$} {
        \ForEach{$w^* \in \nbf(v^*): p(w^*) > p(u^*)$} {
            \If {$p(w^*) > p(v^*)$ } {
                $\delta = count\_wedge(w) - 1$\;
                $(v, w) \gets \fin(v^*, w^*)$\;
                $(u, v) \gets \fin(u^*, v^*)$\;
                $\btf_{(u, v)} \gets \btf_{(u, v)} + \delta$\;
                $\btf_{(v, w)} \gets \btf_{(v, w)} + \delta$\;
            }
            \Else {
                \Break\;
            }
        }
    }
}
\Return{$\btf_e$ for each $e \in E(G)$}
\caption{{\sc \newae}}
\label{algo:newae}
\end{algorithm}


\subsection{Parallelization}
\label{sct:parallel}


\noindent
\textbf{Shared-memory parallelization.}
In Algorithm \ref{algo:newa}, only read operations occur on the graph structure. This motivates us to consider the shared-memory parallelization. Assume we have multiple threads and these threads can handle different start-vertices simultaneously. No conflict occurs when these threads read the graph structure simultaneously. However, conflicts may occur when they update $count\_wedge$ and $\btf_G$ simultaneously in Algorithm \ref{algo:newa}. Thus, we can divide the data-space into the global data-space and the local data-space. In the global data-space, the threads can access the graph structure simultaneously. In the local data-space, we generate $local\_count\_wedge$ and $local\_\btf_G$ for each thread to avoid conflicts. Thus, we can use $O(n * t + m)$ space to extend \newa into a parallel version, where $t$ is the number of threads.

\noindent
\textbf{Scheduling.}
In the parallel algorithm, we also need to consider the schedule strategies which may affect the load balance. We denote the workload for each start-vertex $u$ as $u.l$. We want to minimize the makespan $L$:

\begin{equation}
\label{equation:hit}
L = \max \limits_{1 \leq i \leq t}(\sum_{u \in V_i} u.l)\\
\end{equation}
Here, $V_i$ is the set of start-vertices on thread $i$.

The minimization of {\em L} is a well-known NP-hard optimization problem (i.e., the multiprocessor scheduling problem) \cite{garey2002computers} assuming we know the exact $u.l$ for each start-vertex $u$. Nevertheless, the workload is unknown prior, this makes the problem more challenging.

Below we discuss two types of schedule strategies: the dynamic scheduling and the static scheduling.

\noindent
{\em Dynamic scheduling.} For the dynamic scheduling, we queue all start-vertices in a specified order. Once a thread is idle, we dequeue a start-vertex and allocate it to the idle thread. The dynamic scheduling always delivers a schedule that has makespan at most $(2 - \frac{1}{t})L_{opt}$ where $L_{opt}$ is the optimal makespan even if the workload is unknown in advance \cite{graham1966bounds}. The bound is further reduced to $(\frac{4}{3} - \frac{1}{3t})L_{opt}$ by scheduling the job (i.e., start-vertex) with longer processing time (i.e., larger workload) first \cite{graham1969bounds}.
Since the vertex order may affect the performance, we compare three ordering strategies:

{\em 1) Heuristic strategy.} Sort start-vertices in non-ascending order by their estimated workloads (i.e., $\widetilde{u.l}$ for $u$);

{\em 2) Random strategy.} Sort start-vertices in random order.

{\em 3) Priority-based strategy.} Sort start-vertices in non-ascending order by their priorities.

To make our investigation more complete, we also consider the static scheduling. For the static scheduling, we need to pre-compute the allocations of start-vertices on the threads.

\noindent
{\em Static scheduling.} Here we discuss and compare three allocation strategies: 

{\em 1) Heuristic strategy.} We first estimate the workload for each start-vertex $u$ as: $\widetilde{u.l} = |S|$, where $S = \{w | w \in \nbf(v), v \in \nbf(u), p(w) > p(v) \}$. 
After that, we sort start-vertices by their estimated workloads in non-ascending order. Then we sequentially allocate these vertices and for each start-vertex, we always allocate it to the thread which has the minimal workload so far.

{\em 2) Random strategy.} For each start-vertex $u$, we randomly allocate $u$ to a thread under a uniform distribution.

{\em 3) Priority-based strategy.} First, we sort start-vertices according to their priorities in non-ascending order. Then, for each start-vertex $u$ with priority $p(u)$, we allocate $u$ to the thread $i$ if $p(u)\ mod\ t = i-1$.

\begin{figure}[htb]
\begin{centering}
\subfigure[\texttt{Tracker}, varying the number of threads]{
\begin{minipage}[b]{0.47\textwidth}
\includegraphics[trim=10 0 -10 0,clip,width=1\textwidth]{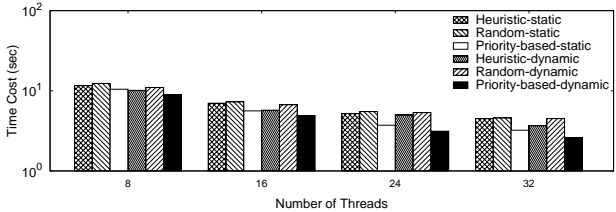}\vspace{-5mm}
\label{fig:t1}
\end{minipage}}
\subfigure[\texttt{Bi-twitter}, varying the number of threads]{
\begin{minipage}[b]{0.47\textwidth}
\vspace{-2mm}
\includegraphics[trim=10 0 -10 0,clip,width=1\textwidth]{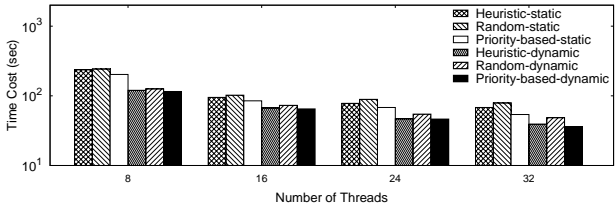}\vspace{-5mm}
\label{fig:t2}
\end{minipage}}
\vspace*{-4mm}\caption{Effect of scheduling}
\label{fig:tt}
\vspace*{-1mm}
\end{centering}
\end{figure}

In Figure \ref{fig:tt}, we compare the performance of the scheduling strategies by applying them into the parallel \newa algorithm on \texttt{Tracker} and \texttt{Bi-twitter} datasets. As shown in Figure \ref{fig:tt}, since the priority-based strategies are light-weight greedy strategies, they perform better than the other strategies in the static scheduling and the dynamic scheduling, respectively. Note that the heuristic strategies need additional pre-computation and the random strategies cannot achieve good performance. In addition, the priority-based dynamic scheduling strategy performs the best on these datasets. 


\begin{algorithm}[th]
\small
\DontPrintSemicolon
\KwIn{$G(V = (U, L), E)$: the input bipartite graph, $t$: number of threads}
\KwOut{$\btf_G$ for graph $G$}
run Algorithm \ref{algo:newa} Line 1 - Line 4\;
initialize $local\_count\_wedge[i]$ and $local\_\btf_G[i]$ for each thread $i\ \gets \ 1..t$\;
sort $u^* \in V(\fg)$ in non-ascending order by their priorities\;
    \ForEach{vertex $u^* \in V(\fg)$} {
        allocate $u^*$ to an idle thread $i$\;
            run Algorithm \ref{algo:newa} Line 6 - Line 15, replace $count\_wedge$, $\btf_G$ with $local\_count\_wedge[i]$, $local\_\btf_G[i]$\;
    }
/* on master thread */\;
$\btf_G \gets \btf_G + local\_\btf_G[i]$ for each thread $i\ \gets \ 1..t$\;

\Return{$\btf_G$}\;
\caption{{\sc \newa in parallel}}
\label{algo:newap}
\end{algorithm}

\noindent
\textbf{The algorithm \newa in parallel.} Since the priority-based dynamic scheduling strategy performs the best in the evaluation, we use it in our parallel algorithms. The details of the algorithm \newa in parallel are shown in Algorithm \ref{algo:newap}. Given a bipartite graph $G$, we first generate a projection graph $\fg$. Then, the algorithm sequentially processes the start-vertices in non-ascending order by their priorities. For a vertex  $u^* \in V(\fg)$, it will be dynamically allocated to an idle thread $i$. Note that, for each thread $i$, we generate an independent space for $local\_count\_wedge[i]$ and $local\_\btf_G[i]$. After all the threads finishing their computation, we compute $\btf_G$ on the master thread. 

\noindent
{\bf Remark.} The strategies presented here can be easily applied to the \bs, \bsa, and \new algorithms. 

\subsection{External memory butterfly counting}

In order to handle large graphs with limited memory size, we introduce the external memory algorithm \newaem in Algorithm \ref{algo:newaem} which is also based on the vertex priority. We first run an external sorting on the edges to group the edges with the same vertex-IDs together. Then we compute the priorities of vertices by sequentially scanning these edges once. Then, for each vertex $v \in V(G)$, we sequentially scan its neighbors from the disk and generate the wedges $(u, v, w)$ with $p(w) > p(v)$ and $p(w) > p(u)$ where $w \in \nb(v)$ and $u \in \nb(v)$ (lines 4 - 6). For each wedge $(u, v, w)$, we only store the vertex-pair $(u, w)$ on disk. After that, we maintain the vertex-pairs on disk such that all $(u, w)$ pairs with the same $u$ and $w$ values are stored continuously (line 7). This can be simply achieved by running an external sorting on these $(u, w)$ pairs. Then, we sequentially scan these vertex-pairs and for the vertex-pair $(u, w)$, we count the occurrence of it and compute $\btf_G$ based on Lemma \ref{lemma:existing} (lines 8 - 10).

\begin{algorithm}[th]
\small
\DontPrintSemicolon
\KwIn{$G(V = (U, L), E)$: the input bipartite graph}
\KwOut{$\btf_G$} 
sort all the edges $e \in G$ on disk\;
compute $p(u)$ for each $u \in V(G)$ on disk // Definition \ref{definition:priority} \;
$\btf_G \gets 0$\;
\ForEach{vertex $v \in G$} {
    \ForAll{$u, w \in \nb(v): p(w) > p(v), p(w) > p(u)$ by sequentially scanning $\nb(v)$ from disk} {
        store vertex-pair $(u, w)$ on disk\;
    }
}
sort all the vertex-pairs on disk\;
\ForEach{vertex-pair $(u, w)$} {
        $count\_pair(u, w) \gets$ count the occurrence of $(u, w)$ on disk sequentially\;
        $\btf_G \gets \btf_G + \binom{count\_pair(u, w)}{2}$\;
}
\Return{$\btf_G$}\;
\caption{{\sc \newaem}}
\label{algo:newaem}
\end{algorithm}

\noindent
\textbf{I/O complexity analysis.} We use the standard notations in \cite{aggarwal1988input} to analyse the I/O complexity of \newaem: $M$ is the main memory size and $B$ is the disk block size. The I/O complexity to scan $N$ elements is $scan(N) = \Theta(\frac{N}{B})$, and the I/O complexity to sort $N$ elements is $sort(N) = O(\frac{N}{B} log_{\frac{M}{B}}\frac{N}{B})$. In \newaem, the dominate cost is to scan and sort the vertex-pairs. Since there are $O(\sum_{(u, v) \in E(G)}min\{\degree(u), \degree(v)\})$ vertex-pairs generated by \newaem, the I/O complexity of \newaem is $O(scan(\sum_{(u, v) \in E(G)}min\{\degree(u), \degree(v)\}) + sort(\sum_{(u, v) \in E(G)}min\{\degree(u), \degree(v)\}))$.

\begin{table*}[ht]
\small
\caption{\label{table:datasets}
Summary of Datasets}
\vspace*{-7mm}
\begin{center}
\scalebox{0.888}{
\begin{tabular}{|c||c|c|c|c|c|c|c|c|c|}
\hline
\textbf{Dataset}& $|E|$  & $|U|$ & $|L|$  & $\btf_G$ & $\sum_{u \in L}d(u)^2$ & $\sum_{v \in R}d(v)^2$  & $\tc_{ibs}$ & $\tc_{new}$\\
\hline
DBPedia & 293{,}697 & 172{,}091 & 53{,}407& $3.76 \times 10^{6}$ & $6.30 \times 10^{5}$ & $2.46 \times 10^{8}$  &$6.30 \times 10^{5}$&$5.95\times 10^{5}$\\
\hline
Twitter& 1{,}890{,}661 & 175{,}214 & 530{,}418 & $2.07 \times 10^{8}$ & $7.42 \times 10^{7}$ & $1.94 \times 10^{9}$  &$7.42 \times 10^{7}$&$3.02 \times 10^{7}$\\
\hline
Amazon& 5{,}743{,}258 & 2{,}146{,}057 & 1{,}230{,}915 & $3.58 \times 10^{7}$ & $8.29 \times 10^{8}$ & $4.37 \times 10^{8}$  &$4.37 \times 10^{8}$&$6.90 \times 10^{7}$\\
\hline
Wiki-fr& 22{,}090{,}703  & 288{,}275 & 4{,}022{,}276   & $6.01 \times 10^{11}$ & $2.19 \times 10^{12}$ & $7.96 \times 10^{8}$  &$7.96 \times 10^{8}$&$7.08 \times 10^{7}$\\
\hline
Live-journal & $112{,}307{,}385$ & 3{,}201{,}203 & 7{,}489{,}073   & $3.30\times 10^{12}$ & $9.57 \times 10^{9}$ & $5.40 \times 10^{12}$ &$9.57 \times 10^{9}$&$8.01 \times 10^{9}$\\
\hline
Wiki-en  & 122{,}075{,}170 & 3{,}819{,}691 & 21{,}504{,}191   & $2.04\times 10^{12}$ & $1.26\times 10^{13}$ & $2.33\times 10^{10}$ &$2.33\times 10^{10}$&$9.32\times 10^{9}$\\
\hline
Delicious  & 101{,}798{,}957 & 833{,}081 & 33{,}778{,}221   & $5.69\times 10^{10}$ & $8.59\times 10^{10}$ & $5.28\times 10^{10}$ &$5.28\times 10^{10}$&$1.31\times 10^{10}$\\
\hline
Tracker  & 140{,}613{,}762 & 27{,}665{,}730 & 12{,}756{,}244   & $2.01\times 10^{13}$ & $1.73\times 10^{12}$ & $2.11\times 10^{14}$ &$1.73\times 10^{12}$&$7.83\times 10^{9}$\\
\hline
Orkut  & 327{,}037{,}487  & 2{,}783{,}196 & 8{,}730{,}857  & $2.21\times 10^{13}$ & $1.57\times 10^{11}$ & $4.90\times 10^{12}$ &$1.57\times 10^{11}$&$1.12\times 10^{11}$\\
\hline
Bi-twitter  & 601{,}734{,}937 & 20{,}826{,}115 & 20{,}826{,}110   & $6.30\times 10^{13}$ & $2.69\times 10^{13}$ & $3.48\times 10^{13}$ &$2.69\times 10^{13}$&$1.66\times 10^{11}$\\
\hline
Bi-sk & 910{,}924{,}634  & 25{,}318{,}075 & 25{,}318{,}075  & $1.22\times 10^{14}$ & $3.42\times 10^{13}$ & $1.80\times 10^{13}$ &$1.80\times 10^{13}$&$7.83\times 10^{10}$\\
\hline
Bi-uk & 1{,}327{,}632{,}357 & 38{,}870{,}511 & 38{,}870{,}511   & $4.89\times 10^{14}$ & $4.22\times 10^{13}$ & $4.16\times 10^{13}$ &$4.16\times 10^{13}$&$2.92\times 10^{11}$\\

\hline
\end{tabular}}
\end{center}
\vspace{-9mm}
\end{table*}

\section{Experiments}
\label{sct:experiment}
In this section, we present the results of empirical studies. In particular, our empirical studies have been conducted against the following algorithms:
1) the state-of-the art \bsa in \cite{sanei2018butterfly} as the baseline algorithm (we thank the authors for providing the code),
2) \new in Section \ref{sct:new}, 3) \newaw in Section \ref{sct:newaw}, 4) \newa in Section \ref{sct:newaa},
5) \bse, \newe, \newae by extending \bsa, \new and \newa, respectively, to compute $\btf_e$ for each edge $e$ in $G$,
6) the parallel version of \bsa, \new and \newa,
7) the most advanced approximate butterfly counting algorithm \bsap in \cite{sanei2018butterfly},
8) \newap by combining \newa with \bsap since \bsap relies on the exact butterfly counting techniques on samples,
and 9) the external memory algorithm \newaem.

The algorithms are implemented in C++ and the experiments are run on a Linux server with 2 $\times$ Intel Xeon E5-2698 processors and 512GB main memory.
Although most empirical studies have been against single core, we want our empirical studies to be conducted on the same computer as the evaluation of parallel performance.
{\em We terminate an algorithm if the running time is more than 10 hours}.

\subsection{Datasets}
\label{label:expsetting}
\noindent




\noindent
We use $12$ datasets in our experiments including all the $9$ real datasets in \cite{sanei2018butterfly} to ensure the fairness. We add $3$ more datasets to evaluate the scalability of our techniques.

The 9 real datasets we used are \texttt{DBPedia} \cite{DBPedia}, \texttt{Twitter} \cite{Twitter}, \texttt{Amazon} \cite{Amazon}, \texttt{Wiki-fr} \cite{Wiki-fr}, \texttt{Wiki-en} \cite{Wiki-en}, \texttt{Live-journal} \cite{Live-journal}, \texttt{Delicious} \cite{Delicious}, \texttt{Tracker} \cite{Tracker} and \texttt{Orkut} \cite{Orkut}.

To further test the scalability, we also evaluate three bipartite networks (i.e., \texttt{Bi-twitter}, \texttt{Bi-sk} and \texttt{Bi-uk}) which are sub-networks obtained from billion-scale real datasets (i.e., \texttt{twitter} \cite{Bi-twitter}, \texttt{sk-2005} \cite{Bi-sk} and \texttt{uk-2006-05} \cite{Bi-uk}).
In order to obtain bipartite-subgraphs from these two datasets, we put the vertices with odd ids in one group while the vertices with even ids in the other group and remove the edges which formed by two vertices with both odd ids or even ids.

The summary of datasets is shown in Table \ref{table:datasets}. $U$ and $L$ are vertex layers, $|E|$ is the number of edges.  $\btf_G$ is the number of butterflies. $\sum_{u \in L}d(u)^2$ and $\sum_{v \in R}d(v)^2$ represent the sum of degree squares for $L$ and $R$, respectively. $\tc_{ibs}$ is computed by $min\{\sum_{u \in L}d(u)^2, \sum_{v \in R}d(v)^2\}$ which is the time complexity bound of \bsa. $\tc_{new}$ is computed by $\sum_{(u, v) \in E(G)}min\{\degree(u), \degree(v)\}$ which is the time complexity bound of \new and \newa.


\subsection{Performance Evaluation}
In this section, we evaluate the performance of the proposed algorithms.


\begin{figure}[htb]
\begin{centering}
\includegraphics[trim=0 0 0 0,clip,width=0.46\textwidth]{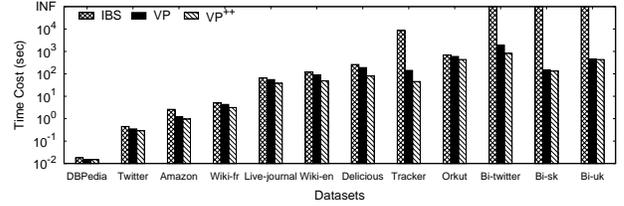}
\vspace*{-3mm}\caption{Performance on different datasets}
\label{fig:all}
\end{centering}
\end{figure}

\begin{figure}[htb]
\begin{centering}
\includegraphics[trim=0 0 0 0,clip,width=0.46\textwidth]{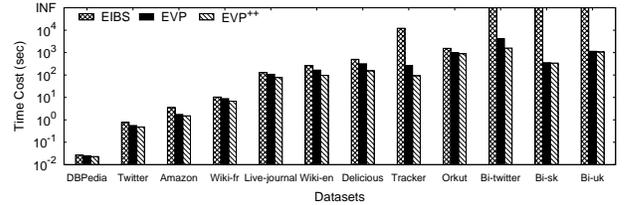}
\vspace*{-3mm}\caption{Performance on different datasets  (Counting the number of butterflies containing each edge $e$ in $G$)}
\label{fig:all_e}
\end{centering}
\end{figure}

\noindent
{\bf Evaluating the performance on all the datasets.}
In Figure \ref{fig:all}, we show the performance of the \bsa, \new and \newa algorithms on different datasets. We can observe that \newa is the most efficient algorithm, while \new also outperforms \bsa. This is because the \newa algorithm utilizes both the vertex-priority based optimization and the cache-aware strategies which significantly reduce the computation cost. On \texttt{Tracker}, the \new and \newa algorithms are at least two orders of magnitude faster than the \bsa algorithm. On \texttt{Bi-twitter}, \texttt{Bi-sk} and \texttt{Bi-uk}, the \bsa algorithm cannot finish within 10 hours. This is because the degree distribution of these datasets are skewed and high-degree vertices exist in both layers. For instance, $TC_{ibs}$ is more than 100$\times$ larger than $TC_{new}$ in \texttt{Tracker}. This property leads to a large number of wedge processing for \bsa while our \new and \newa algorithms can handle this situation efficiently. 

In Figure \ref{fig:all_e}, we show the performance of the algorithms which compute $\btf_e$ for each edge $e$ in $G$. The performance differences of these algorithms follow similar trends to those in Figure \ref{fig:all}. We can also observe that the \newae algorithm is the most efficient algorithm.


\begin{figure}[htb]
\begin{centering}
\includegraphics[trim=0 0 0 0,clip,width=0.46\textwidth]{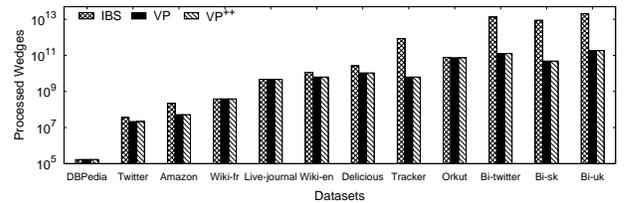}
\vspace*{-4mm}\caption{The number of processed wedges}
\label{fig:wedge}
\end{centering}
\end{figure}

\noindent
{\bf Evaluating the number of processed wedges.}
In Figure \ref{fig:wedge}, we show the number of processed wedges of the algorithms on all the datasets. We can observe that on \texttt{Tracker}, \texttt{Bi-twitter}, \texttt{Bi-sk} and \texttt{Bi-uk} datasets, \bsa needs to process $100 \times$ more wedges than \new and \newa. This is because $TC_{ibs}$ is much larger than $TC_{new}$ and there exist many hub-vertices in both $L$ and $R$ in these datasets. Thus, \new and \newa only need to process a limited number of wedges while \bsa should process numerous wedges no matter choosing which layer to start. We also observe that \new and \newa process the same number of wedges since \newa improves \new on cache performance which does not change the number of processed wedges.

\begin{figure}[htb]
\begin{centering}
\vspace*{-2mm}\subfigure[\texttt{Wiki-en}, varying $n$]{
\begin{minipage}[b]{0.2\textwidth}
\includegraphics[trim=0 0 0 0,clip,width=1\textwidth]{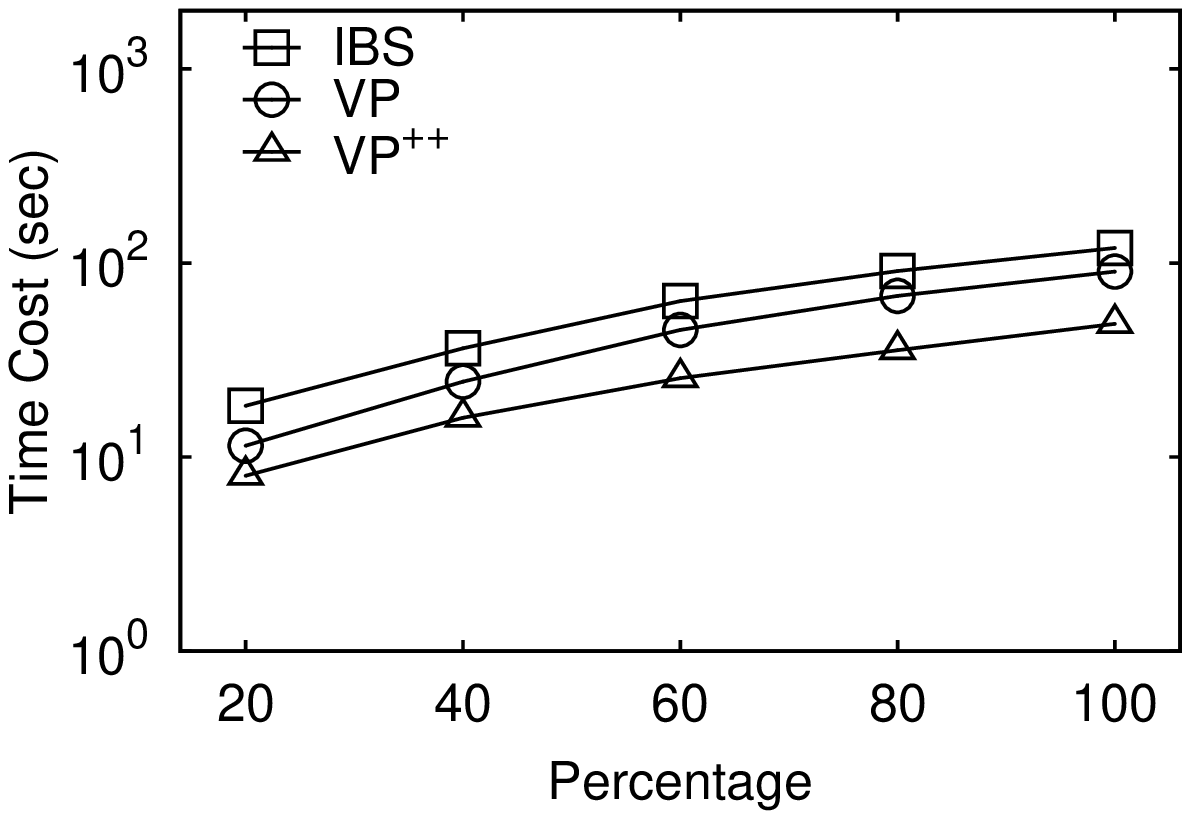}\vspace{-5.5mm}
\label{fig:n1}
\end{minipage}}
\vspace*{-4mm}\subfigure[\texttt{Delicious}, varying $n$]{
\begin{minipage}[b]{0.2\textwidth}
\includegraphics[trim=0 0 0 0,clip,width=1\textwidth]{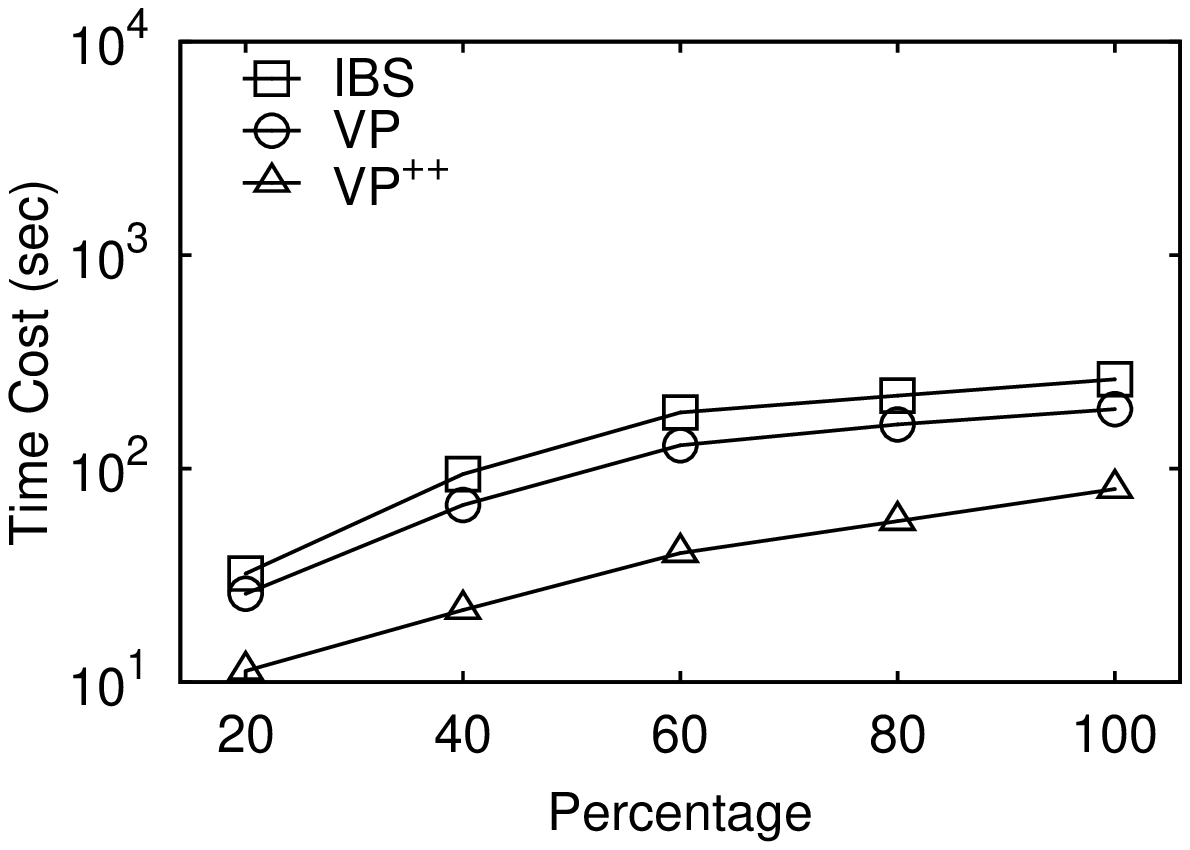}\vspace{-5.5mm}
\label{fig:n2}
\end{minipage}}
\subfigure[\texttt{Tracker}, varying $n$]{
\begin{minipage}[b]{0.2\textwidth}
\includegraphics[trim=0 0 0 0,clip,width=1\textwidth]{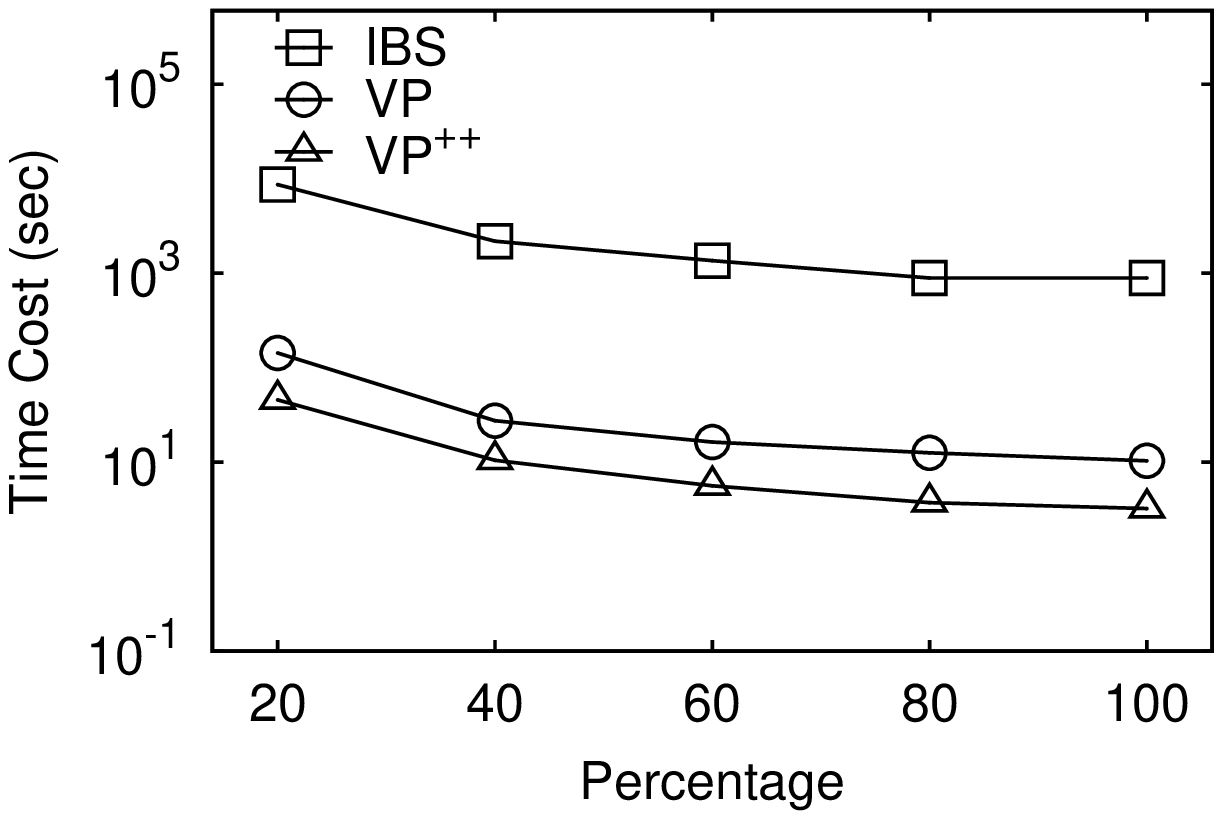}\vspace{-5.5mm}
\label{fig:n1}
\end{minipage}}
\subfigure[\texttt{Bi-twitter}, varying $n$]{
\begin{minipage}[b]{0.2\textwidth}
\includegraphics[trim=0 0 0 0,clip,width=1\textwidth]{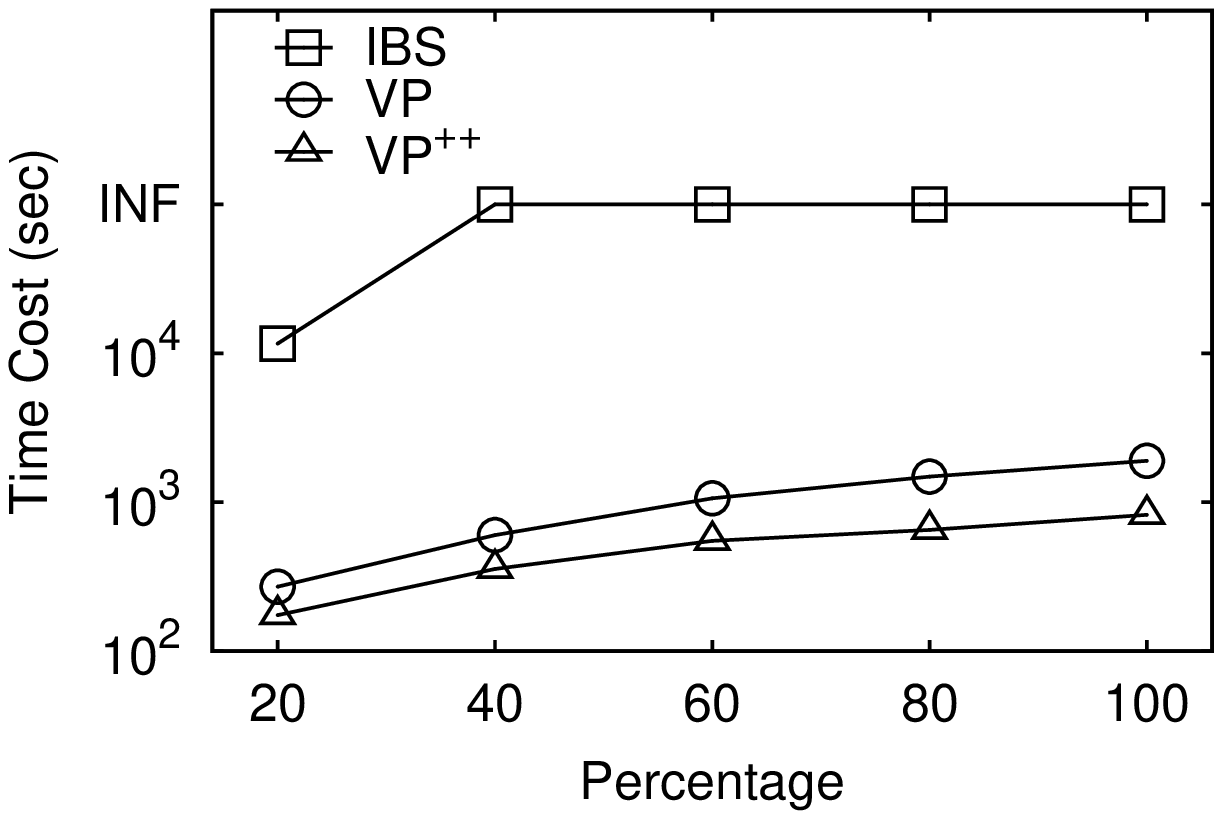}\vspace{-5.5mm}
\label{fig:n2}
\end{minipage}}
\vspace*{-4mm}\caption{Effect of graph size}
\label{fig:n}
\end{centering}
\end{figure}

\noindent
{\bf Scalability.} {\em Evaluating the effect of graph size.} Figure \ref{fig:n} studies the scalability of the algorithms by varying the graph size $n$ on four datasets. When varying $n$, we randomly sample 20\% to 100\% vertices of the original graphs, and construct the induced subgraphs using these vertices. We can observe that, \new and \newa are scalable and the computation cost of them all increases when the percentage of vertices increases. On \texttt{Bi-twitter}, \bsa can only complete when $n = 20\%$. As discussed before, \newa is the most efficient algorithm.

\begin{figure}[htb]
\begin{centering}
\vspace*{-3mm}\subfigure[\texttt{Wiki-en}, varying $t$]{
\begin{minipage}[b]{0.2\textwidth}
\includegraphics[trim=0 0 0 0,clip,width=1\textwidth]{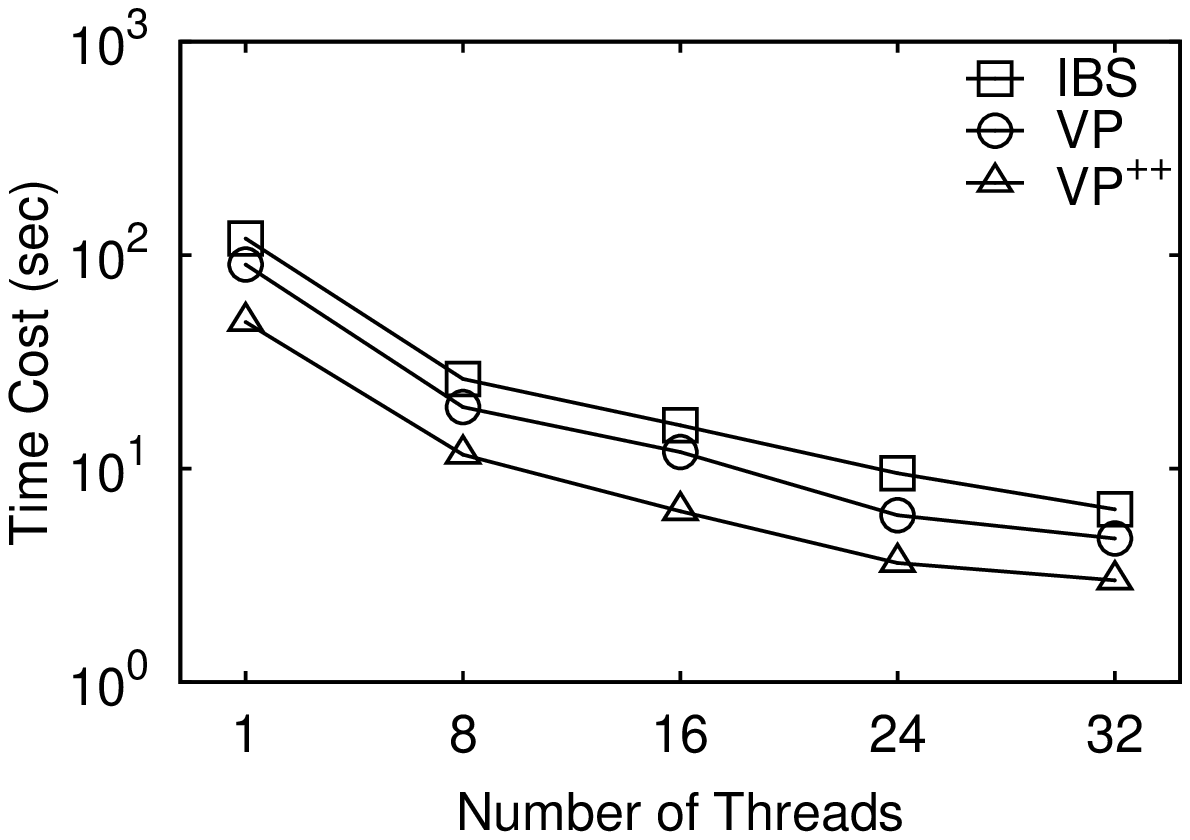}\vspace{-5.5mm}
\label{fig:n1}
\end{minipage}}
\vspace*{-4mm}\subfigure[\texttt{Delicious}, varying $t$]{
\begin{minipage}[b]{0.2\textwidth}
\includegraphics[trim=0 0 0 0,clip,width=1\textwidth]{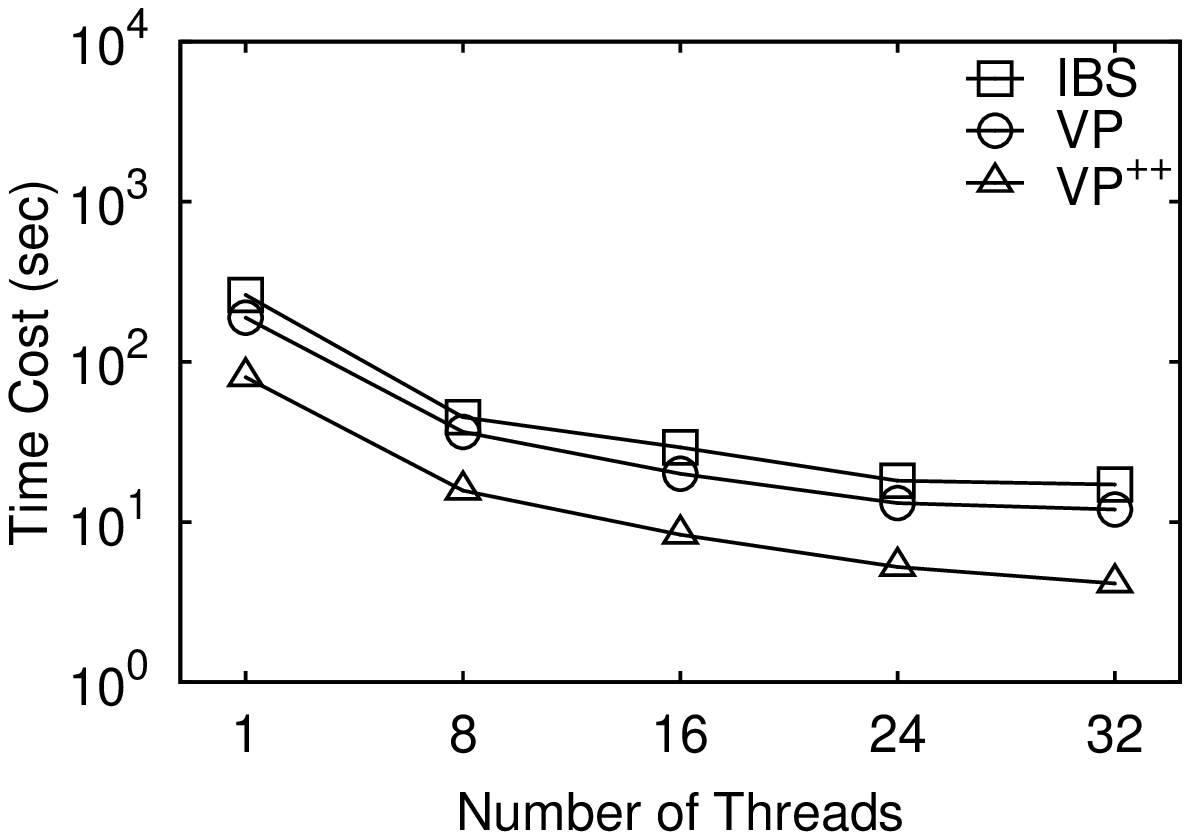}\vspace{-5.5mm}
\label{fig:n2}
\end{minipage}}
\subfigure[\texttt{Tracker}, varying $t$]{
\begin{minipage}[b]{0.2\textwidth}
\includegraphics[trim=0 0 0 0,clip,width=1\textwidth]{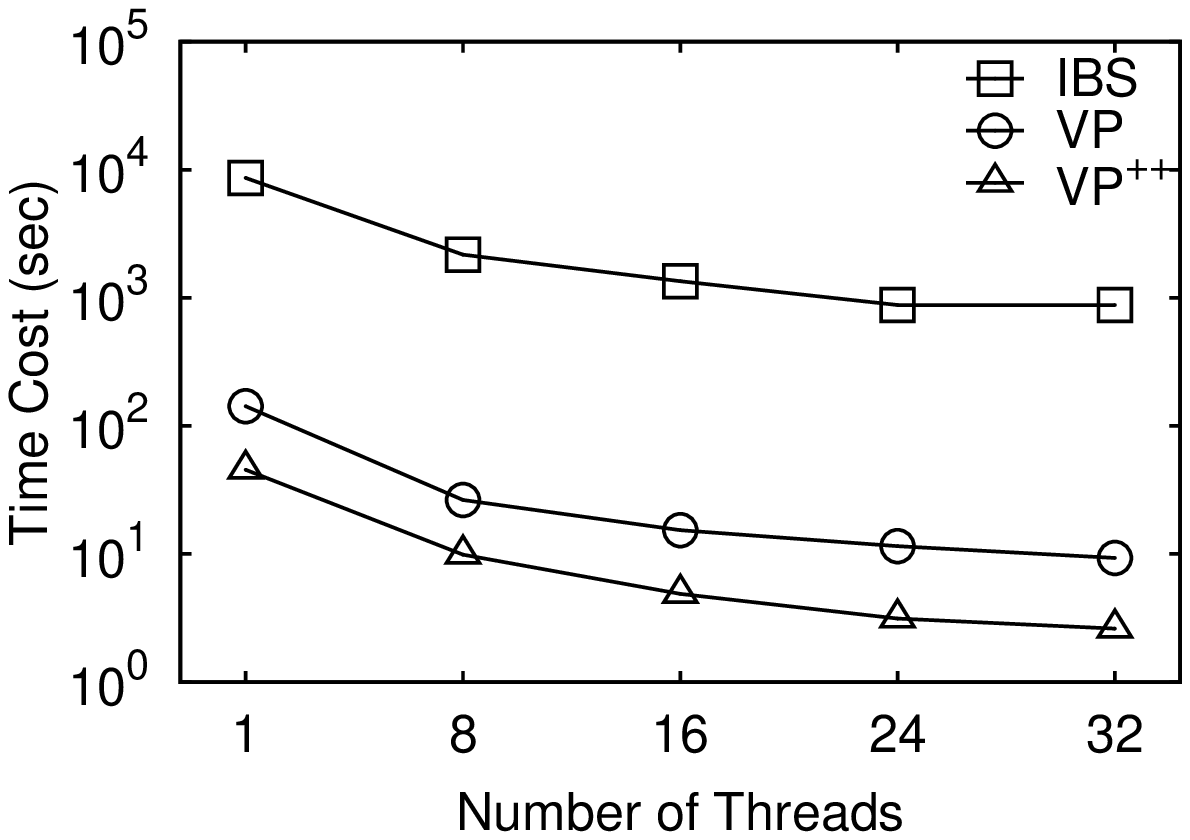}\vspace{-5.5mm}
\label{fig:n1}
\end{minipage}}
\subfigure[\texttt{Bi-twitter}, varying $t$]{
\begin{minipage}[b]{0.2\textwidth}
\includegraphics[trim=0 0 0 0,clip,width=1\textwidth]{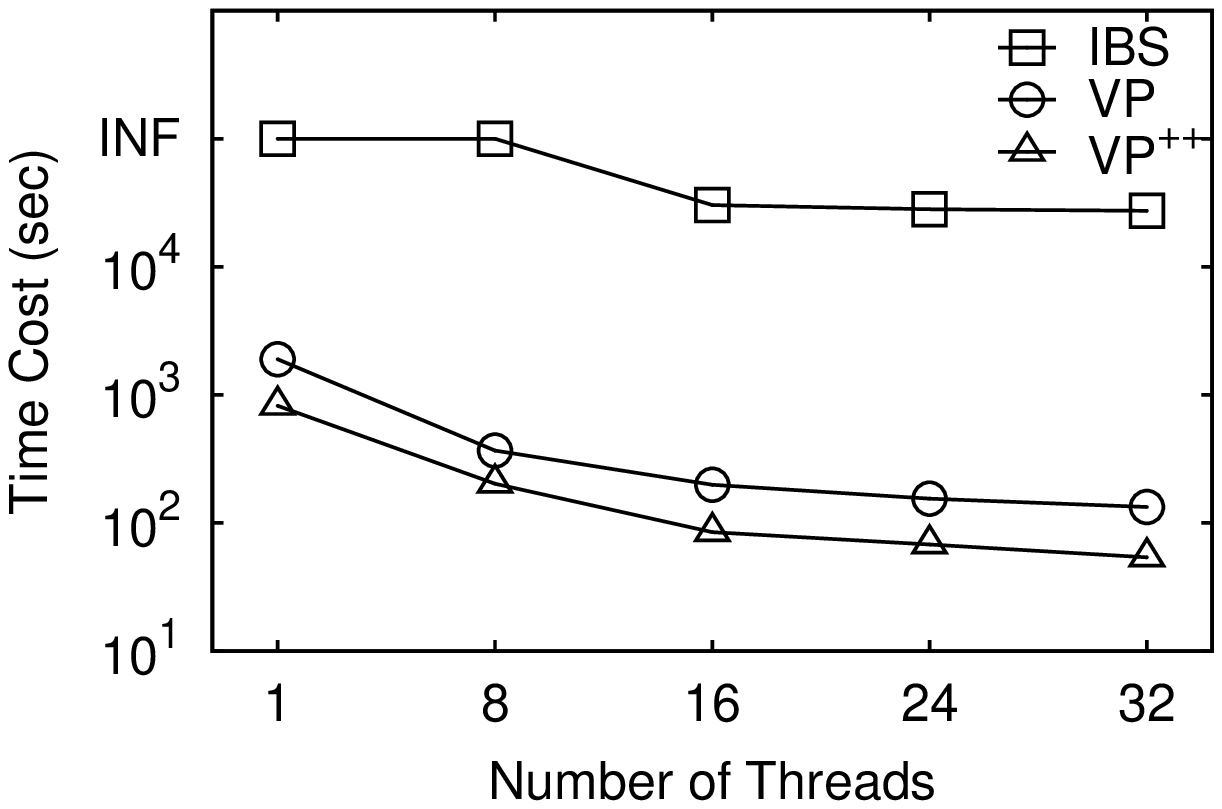}\vspace{-5.5mm}
\label{fig:n2}
\end{minipage}}
\vspace*{-4mm}\caption{Effect of $t$}
\label{fig:n}
\end{centering}
\end{figure}

\noindent
{\bf Speedup.} {\em Evaluating the parallelization.}
Figure \ref{fig:n} studies the performance of the \bsa, \new and \newa algorithms in parallel by varying the thread number $t$ from 1 to 32 on four datasets. The \bsa algorithm in parallel is not parallel-friendly. For example, on \texttt{Tracker}, the \bsa algorithm in parallel performs worse when $t$ increases from 16 to 32. On \texttt{Bi-twitter}, the algorithm \bsa in parallel cannot get a result within the timeout threshold when $t$ = 1 and $t$ = 8. We can also observe that, on all these datasets, the computation costs of the \new and \newa algorithms in parallel decrease when the number of threads increases and the algorithm \newa in parallel is more efficient than the other algorithms.


\begin{table}[!h]
\small
\caption{\label{table:miss-wiki}
Cache Statistics over \texttt{Wiki-en}}
\vspace{-6mm}
\begin{center}
\scalebox{0.85}{
\begin{tabular}{|c||c|c|c|c|}
\hline
\textbf{Algorithm} & $Cache$-$ref$ & $Cache$-$m$ & $Cache$-$mr$ & $Time(s)$\\
\hline
\textbf{$\new$} & $2.78 \times 10^{11}$ & $3.13 \times 10^{9}$ & 1.12\%& 90.41\\
\hline
\textbf{$\newac$} & $2.39 \times 10^{11}$ & $1.46 \times 10^9$ & 0.61\%&63.45\\
\hline
\textbf{$\newaw$} & $2.68 \times 10^{11}$ & $1.55 \times 10^{9}$ & 0.58\%& 65.26\\
\hline
\textbf{$\newa$} & $2.36 \times 10^{11}$ & $8.30 \times 10^{8}$ & 0.35\%& 48.60\\
\hline
\end{tabular}}
\end{center}
\vspace{-4mm}
\end{table}

\begin{table}[!h]
\small
\caption{\label{table:miss-deli}
Cache Statistics over \texttt{Delicious}}
\vspace{-6mm}
\begin{center}
\scalebox{0.85}{
\begin{tabular}{|c||c|c|c|c|}
\hline
\textbf{Algorithm} & $Cache$-$ref$ & $Cache$-$m$ & $Cache$-$mr$ & $Time(s)$\\
\hline
\textbf{$\new$} & $4.53 \times 10^{11}$ & $8.36 \times 10^{9}$ & 1.85\%&189.71\\
\hline
\textbf{$\newac$} & $4.19 \times 10^{11}$ & $4.08 \times 10^9$ & 0.97\%&133.48\\
\hline
\textbf{$\newaw$} & $4.40 \times 10^{11}$ & $3.87 \times 10^{9}$ & 0.88\%&102.82\\
\hline
\textbf{$\newa$} & $4.13 \times 10^{11}$ & $1.01 \times 10^{9}$ & 0.24\%&80.26\\
\hline
\end{tabular}}
\end{center}
\vspace{-4mm}
\end{table}

\begin{table}[!h]
\small
\caption{\label{table:miss-tracker}
Cache Statistics over \texttt{Tracker}}
\vspace{-6mm}
\begin{center}
\scalebox{0.85}{
\begin{tabular}{|c||c|c|c|c|}
\hline
\textbf{Algorithm} & $Cache$-$ref$ & $Cache$-$m$ & $Cache$-$mr$ & $Time(s)$\\
\hline
\textbf{$\new$} & $2.74 \times 10^{11}$ & $5.27 \times 10^{9}$ & 1.93\%&142.66\\
\hline
\textbf{$\newac$} & $2.40 \times 10^{11}$ & $1.88 \times 10^9$ & 0.84\%&87.61\\
\hline
\textbf{$\newaw$} & $2.52 \times 10^{11}$ & $1.75 \times 10^9$ & 0.78\%&82.16\\
\hline
\textbf{$\newa$} & $2.39 \times 10^{11}$ & $6.20 \times 10^8$ & 0.26\%&45.48\\
\hline
\end{tabular}}
\end{center}
\vspace{-4mm}
\end{table}

\begin{table}[!h]
\small
\caption{\label{table:miss-twitter}
Cache Statistics over \texttt{Bi-twitter}}
\vspace{-6mm}
\begin{center}
\scalebox{0.85}{
\begin{tabular}{|c||c|c|c|c|}
\hline
\textbf{Algorithm} & $Cache$-$ref$ & $Cache$-$m$ & $Cache$-$mr$ & $Time(s)$\\
\hline
\textbf{$\new$} & $4.87 \times 10^{12}$ & $4.96 \times 10^{10}$ & 1.02\%&1897.15\\
\hline
\textbf{$\newac$} & $4.55 \times 10^{11}$ & $2.47 \times 10^{10}$ & 0.54\%&1261.11\\
\hline
\textbf{$\newaw$} & $4.58 \times 10^{12}$ & $2.39 \times 10^{10}$ & 0.52\%&1096.86\\
\hline
\textbf{$\newa$} & $4.54 \times 10^{12}$ & $1.35 \times 10^{10}$ & 0.30\%&822.31\\
\hline
\end{tabular}}
\end{center}
\vspace{-4mm}
\end{table}

\noindent
{\bf Evaluating the cache-aware strategies.}
In Table \ref{table:miss-wiki}, Table \ref{table:miss-deli}, Table \ref{table:miss-tracker} and Table \ref{table:miss-twitter}, we evaluate the cache-aware strategies on \texttt{Wiki-en}, \texttt{Delicious}, \texttt{Tracker} and \texttt{Bi-twitter}, respectively. Here, \emph{Cache-ref} denotes the total cache access number. \emph{Cache-m} denotes the total cache miss number which means the number of cache references missed. \emph{Cache-mr} denotes the percentage of cache references missed over the total cache access number. \emph{Time} denotes the computation time of different algorithms. Here, $\newaw$ is the $\new$ algorithm deploying with only the cache-aware wedge processing strategy. $\newac$ is the $\new$ algorithm deploying with only the graph projection strategy. \new has the largest number of cache-miss on all the datasets. By utilizing the cache-aware wedge processing, compared with \new, \newaw reduces the number of cache miss over 50\% on all the datasets. By utilizing the cache-aware projection, compared with \new, \newac also reduces over 50\% cache miss on all the datasets.  \newa achieves the smallest cache-miss-numbers, and reduces the cache-miss-ratio significantly on all these datasets since \newa combines the two cache-aware strategies together. Compared with \new, \newa reduces over more than 70\% cache miss on all the testing datasets. 

\begin{figure}[htb]
\begin{centering}
\subfigure[\texttt{Tracker}, varying $p$]{
\begin{minipage}[b]{0.2\textwidth}
\includegraphics[trim=0 0 0 0,clip,width=1\textwidth]{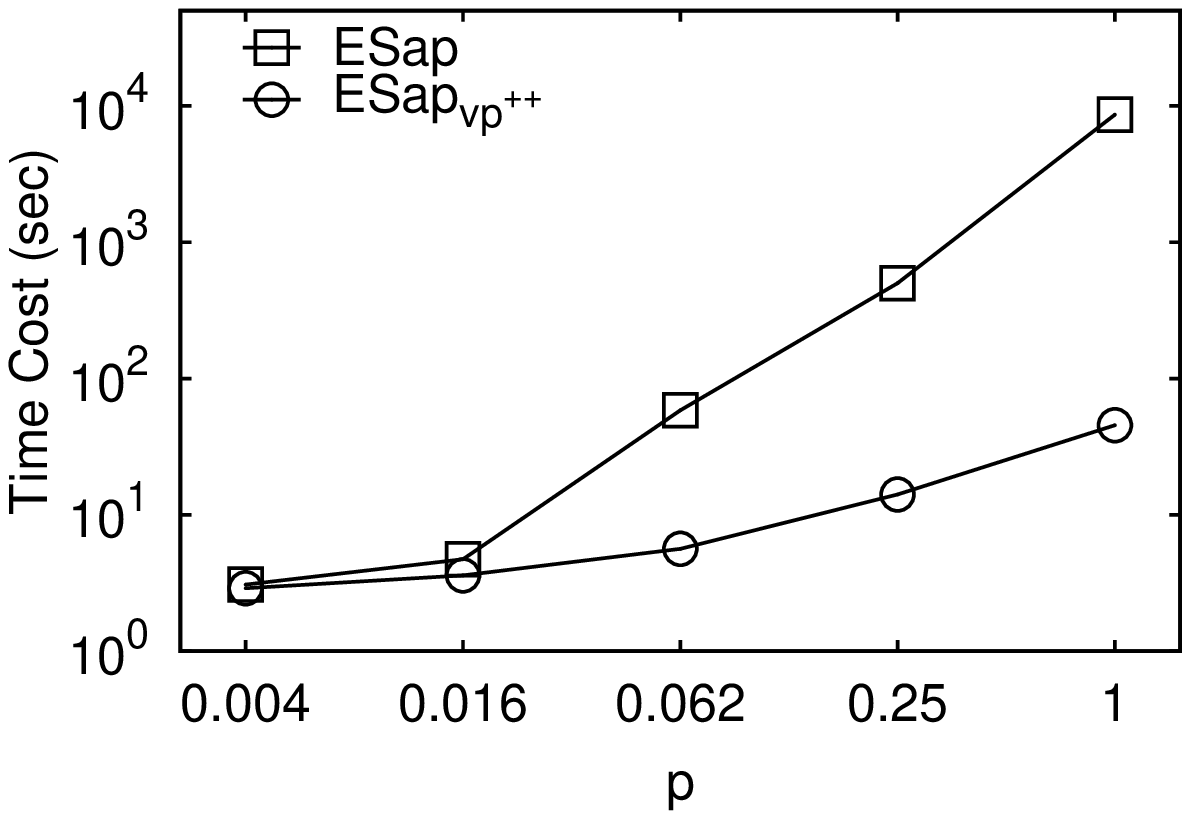}\vspace{-5.5mm}
\label{fig:p3}
\end{minipage}}
\subfigure[\texttt{Bi-twitter}, varying $p$]{
\begin{minipage}[b]{0.2\textwidth}
\includegraphics[trim=0 0 0 0,clip,width=1\textwidth]{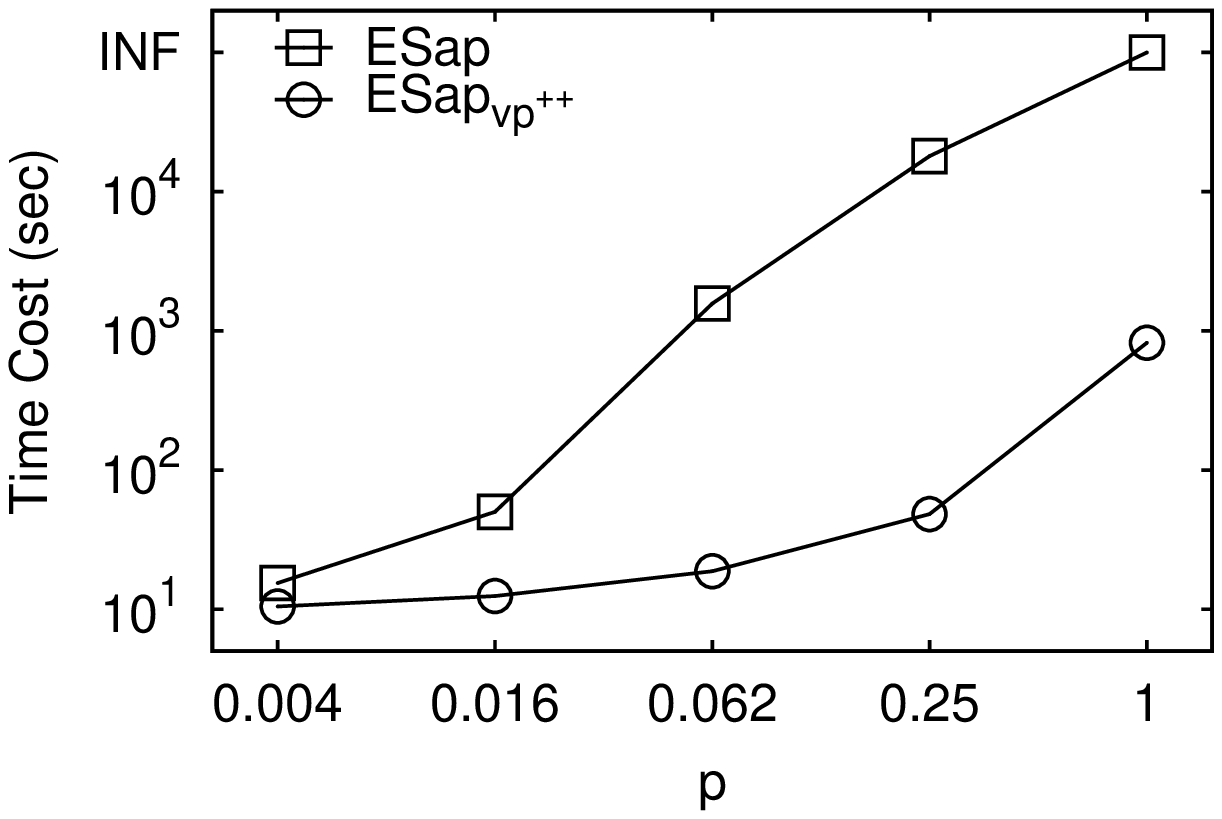}\vspace{-5.5mm}
\label{fig:p4}
\end{minipage}}
\vspace*{-4mm}\caption{Effect of $p$}
\label{fig:p}
\end{centering}
\end{figure}

\begin{figure}[htb]
\begin{centering}
\subfigure[\texttt{Tracker}, varying $\epsilon$]{
\begin{minipage}[b]{0.2\textwidth}
\includegraphics[trim=0 0 0 0,clip,width=1\textwidth]{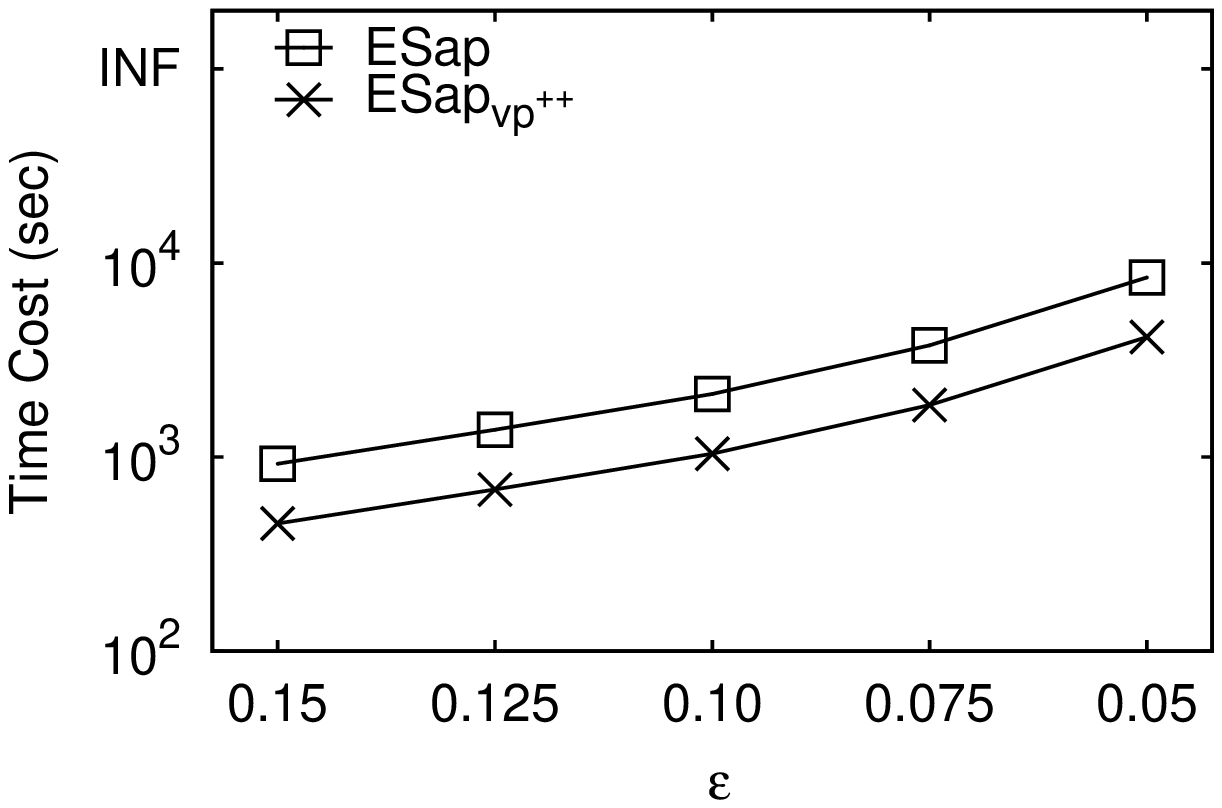}\vspace{-5.5mm}
\label{fig:p3}
\end{minipage}}
\subfigure[\texttt{Bi-twitter}, varying $\epsilon$]{
\begin{minipage}[b]{0.2\textwidth}
\includegraphics[trim=0 0 0 0,clip,width=1\textwidth]{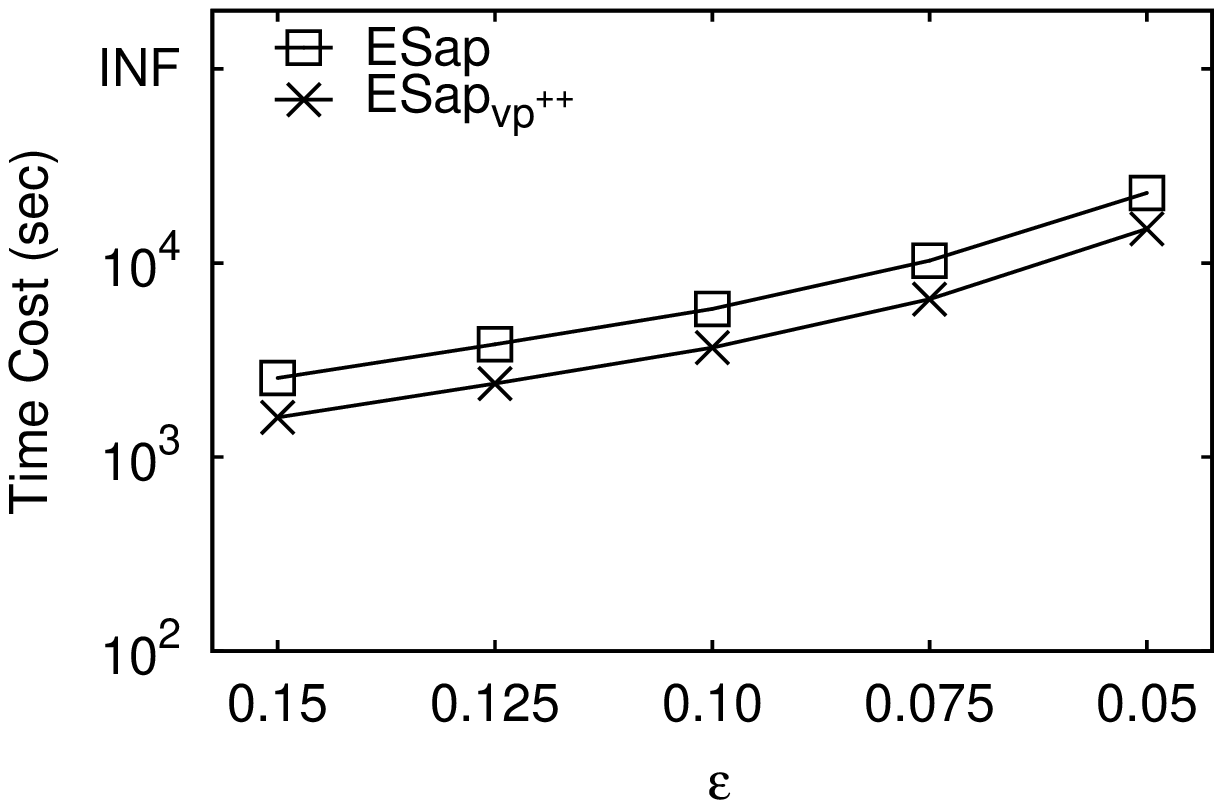}\vspace{-5.5mm}
\label{fig:p4}
\end{minipage}}
\vspace*{-4mm}\caption{ Effect of $\epsilon$}
\label{fig:e}
\end{centering}
\end{figure}

\noindent
{\bf Speeding up the approximate butterfly counting algorithm.}
In the approximate algorithm \bsap \cite{sanei2018butterfly}, the exact butterfly counting algorithm \bsa is served as a basic block to count the butterfly exactly in a sampled graph. Since \newa and \bsa both count the number of butterfies exactly, the approximate algorithm \newap can be obtained by applying \newa in \bsap without changing the theoretical guarantee.

In Figure \ref{fig:p}, we first evaluate the average running time of \bsap and \newap for each iteration by varying the probability $p$. Comparing two approximate algorithms, \newap outperforms \bsap under all the setting of $p$ on \texttt{Tracker} and \texttt{Bi-twitter} datasets. Especially, on these two datasets, \newap is more than one order of magnitude faster than \bsap when $p \geq 0.062$.

In the second experiment, we run the algorithms to yield the theoretical guarantee $Pr[|\hat\btf_G-\btf_G|>\epsilon \btf_G] \leq \delta$ as shown in \cite{sanei2018butterfly}. We vary $\epsilon$ and fix $\delta = 0.1$ and $p$ as the optimal $p$ as suggested in \cite{sanei2018butterfly}. We can see that for these two approximate algorithms, the time costs are increased on these two datasets in order to get a better accuracy and \newap significantly outperforms \bsap as mentioned before.

\begin{figure}[htb]
\begin{centering}
\subfigure[Time Cost, varying $n$]{
\begin{minipage}[b]{0.2\textwidth}
\includegraphics[trim=0 0 0 0,clip,width=1\textwidth]{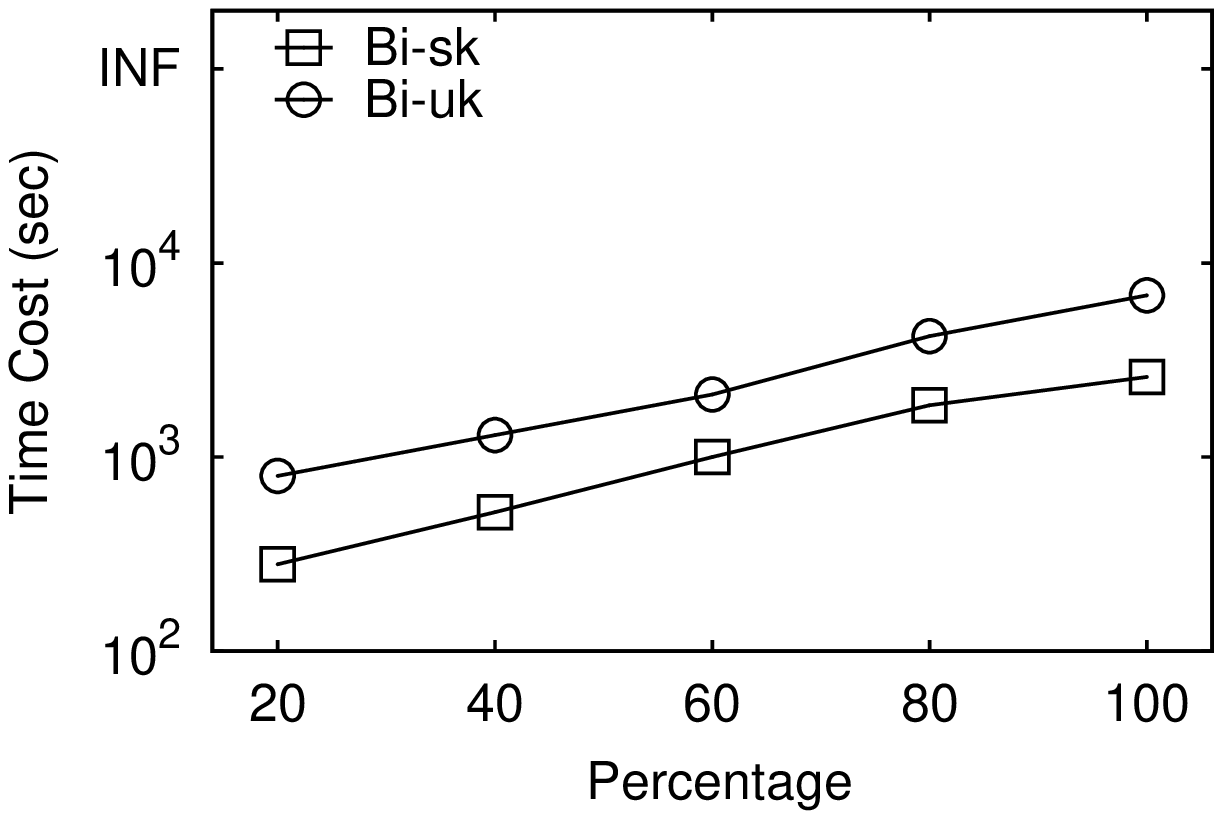}\vspace{-5.5mm}
\label{fig:p3}
\end{minipage}}
\subfigure[I/O, varying $n$]{
\begin{minipage}[b]{0.2\textwidth}
\includegraphics[trim=0 0 0 0,clip,width=1\textwidth]{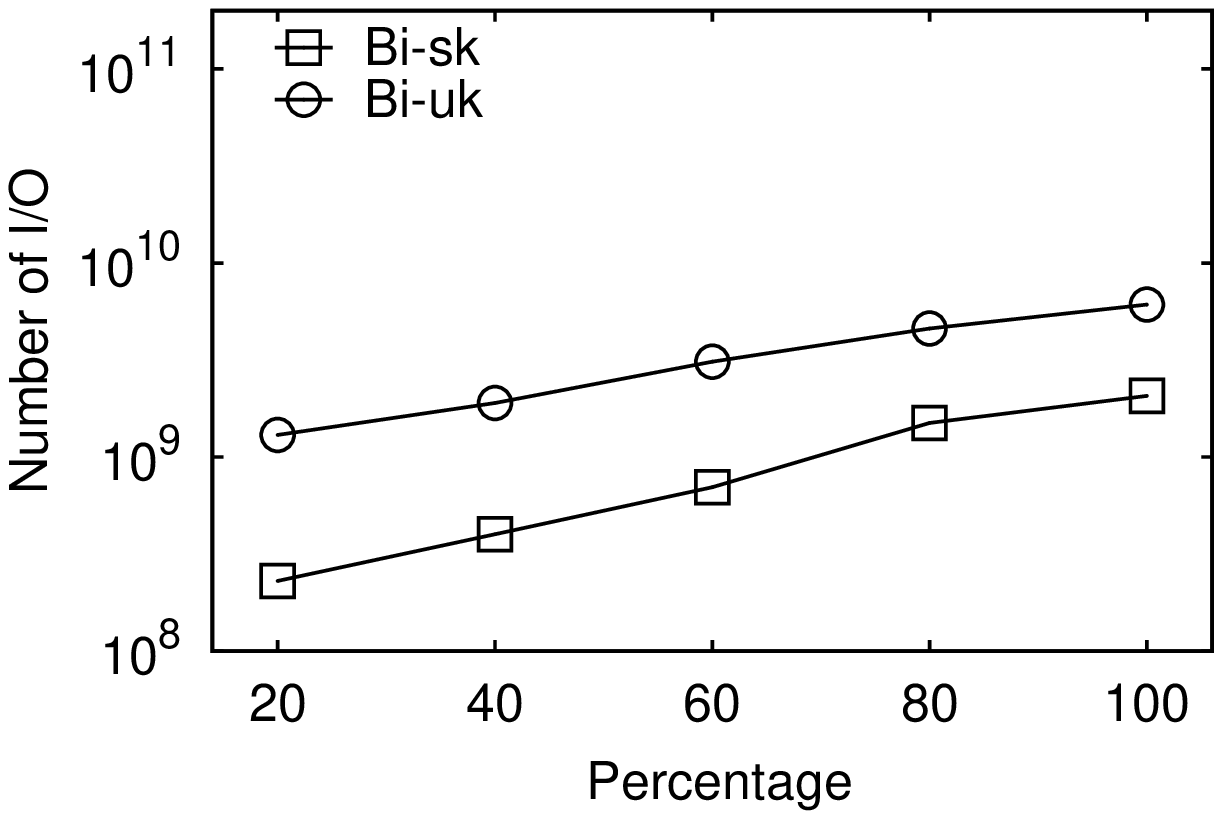}\vspace{-5.5mm}
\label{fig:p4}
\end{minipage}}
\vspace*{-4mm}\caption{ Evaluating the external memory algorithm}
\label{fig:em}
\end{centering}
\end{figure}

\noindent
{\bf Evaluating the external memory algorithm.}
In Figure \ref{fig:em}, we evaluate the scalability of the external memory algorithm \newaem on two large datasets \texttt{Bi-sk} and \texttt{Bi-uk} by varying the graph size $n$. We limit the memory size to 1GB in our evaluation. On \texttt{Bi-sk} and \texttt{Bi-uk}, we can see that the time cost and I/O both increase with the percentage of vertices increases.

\begin{table}[htb]
\small
\caption{Time cost compared with Gorder}
\vspace*{-7mm}
\begin{center}
\scalebox{0.74}{
\begin{tabular}{|c||c|c|c|c|c|c|}
\hline
 \multirow{2}*{\textbf{Dataset}}& \multicolumn{2}{c|}{$Renumbering\ time$} & \multicolumn{2}{c|}{$Computation\ time$} &\multicolumn{2}{c|}{$Total\ time$}\\
\cline{2-7}
 ~  & Projection & Gorder & Projection & Gorder & Projection & Gorder \\
\hline
DBPedia & \textbf{0.01}&	0.04	& \textbf{0.02}	&0.03	&\textbf{0.03}	&0.07\\
\hline
Twitter & \textbf{0.06}&	4.26	& 0.29	&\textbf{0.25}	&\textbf{0.35}	&4.51\\
\hline
Amazon & \textbf{0.30}&	3.56	& \textbf{0.96}	&1.46	&\textbf{1.26}	&5.02\\
\hline
Wiki-fr  & \textbf{0.49}	&28.51	&\textbf{3.16}	& 5.28	&\textbf{3.65}	&33.79\\
\hline
Live-journal & \textbf{1.32}	&125.96	& \textbf{37.86}	&52.76	&\textbf{39.18}	&178.72\\
\hline
Wiki-en  & \textbf{3.02}	&856.07	& \textbf{48.60}	&75.78	&\textbf{51.62}	&931.85\\
\hline
Delicious  & \textbf{3.82}	&2225.44	& \textbf{80.26}	&134.86	&\textbf{84.08}	&2360.30\\
\hline
Tracker  & \textbf{4.89}	&315.01	& \textbf{45.48}	&56.13	&\textbf{50.37}	&371.13\\ 			
\hline
Orkut  & \textbf{2.17}	&1615.01	& \textbf{435.12}	&553.03	&\textbf{437.29}	&2168.04\\			
\hline
Bi-twitter  & \textbf{6.64}	&3211.63	& \textbf{822.31}	&1276.63	&\textbf{828.95}	&4488.26\\			
\hline
Bi-sk & \textbf{8.32}	&605.87	&133.34	& \textbf{107.07}	&\textbf{141.66}	&692.94\\ 			
\hline
Bi-uk & \textbf{9.91}	&1231.93	&435.29	& \textbf{401.64}	&\textbf{445.20}	&1633.57\\
\hline
\end{tabular}
}
\vspace{-5.3mm}
\end{center}
\label{table:go-time}
\end{table}

\begin{table}[htb]
\small
\caption{Cache statistics compared with Gorder}
\vspace*{-7mm}
\begin{center}
\scalebox{0.686}{
\begin{tabular}{|c||c|c|c|c|c|c|}
\hline
 \multirow{2}*{\textbf{Dataset}}& \multicolumn{2}{c|}{$Cache\ reference$} & \multicolumn{2}{c|}{$Cache\ miss$} &\multicolumn{2}{c|}{$Cache\ miss\ ratio$}\\
\cline{2-7}
 ~  & Projection & Gorder & Projection & Gorder & Projection & Gorder \\
\hline
DBPedia & \textbf{4.02 $\times 10^7$}&	$5.61 \times 10^7$&	\textbf{4.54 $\times 10^4$} &	$1.20 \times 10^5$	&\textbf{0.11\%}&	0.21\%\\
\hline
Twitter & \textbf{8.89 $\times 10^8$}&	$9.56 \times 10^8$&	5.09 $\times 10^5$ & \textbf{4.68 $\times 10^5$}	&0.06\%&	\textbf{0.05}\%\\
\hline
Amazon & \textbf{2.51 $\times 10^9$}&	$2.52 \times 10^9$&	\textbf{8.93 $\times 10^6$} &	$1.02 \times 10^7$	&\textbf{0.36\%}&	0.40\%\\
\hline
Wiki-fr  & \textbf{1.34$\times 10^{10}$}	&1.38$\times 10^{10}$&	\textbf{7.33$\times 10^{7}$}	&8.40$\times 10^{7}$&	\textbf{0.55\%}&	0.61\%\\
\hline
Live-journal & 1.72$\times 10^{11}$	&\textbf{1.68$\times 10^{11}$}&	\textbf{6.68$\times 10^{8}$}	&8.02$\times 10^{8}$		&\textbf{0.39\%}&	0.48\%\\
\hline
Wiki-en  & 2.36$\times 10^{11}$&	\textbf{2.30$\times 10^{11}$}	&\textbf{8.30$\times 10^{8}$}	&	1.29$\times 10^{9}$		&\textbf{0.35\%}&	0.56\%\\
\hline
Delicious  & 4.13$\times 10^{11}$&	\textbf{4.03$\times 10^{11}$}&	\textbf{1.01$\times 10^{9}$}		&1.63$\times 10^{9}$	&	\textbf{0.24\%}	&0.40\%\\
\hline
Tracker  & 2.39$\times 10^{11}$	&\textbf{2.34$\times 10^{11}$}	&\textbf{6.20$\times 10^{8}$}		&7.29$\times 10^{9}$		&\textbf{0.26\%}	&0.31\%\\
\hline
Orkut  & 2.69$\times 10^{12}$	&\textbf{2.58$\times 10^{12}$}	&\textbf{7.21$\times 10^{9}$}&	8.38$\times 10^{9}$&	\textbf{0.27\%}	&0.33\%\\
\hline
Bi-twitter  & 4.54$\times 10^{12}$	&\textbf{4.49$\times 10^{12}$}	&\textbf{1.35$\times 10^{10}$}	&3.04$\times 10^{10}$	&\textbf{0.30\%}	&0.68\%\\
\hline
Bi-sk & 1.64$\times 10^{12}$&	\textbf{1.58$\times 10^{12}$}&	2.29$\times 10^{9}$&	\textbf{2.01$\times 10^{9}$}&	0.14\%&	\textbf{0.13\%}\\
\hline
Bi-uk & 6.15$\times 10^{12}$&	\textbf{6.00$\times 10^{12}$}&	3.67$\times 10^{9}$&	\textbf{3.21$\times 10^{9}$}&	0.06\%&	\textbf{0.05\%}\\
\hline
\end{tabular}
}
\end{center}
\vspace{-3mm}
\label{table:go-cache}
\end{table}

\noindent
{\bf Graph projection vs Gorder.}
In \cite{wei2016speedup}, the authors proposed the {\em Gorder} model to reduce the cache miss in graph algorithms. Here, we replace the graph projection with Gorder in \newa and evaluate the difference of performances. 

Table \ref{table:go-time} shows the time cost. We can observe that the renumbering time cost of the graph projection is much less than Gorder on all datasets. This is because graph projection can be simply obtained according to the priority number of vertices while Gorder needs complex renumbering computation. Regarding the computation time, the performance of the algorithm with graph projection is better than the algorithm with Gorder on 9 datasets while the algorithm with Gorder is better on 3 datasets. Finally, the total time cost of graph projection is better than Gorder.


Table \ref{table:go-cache} shows the cache statistics. Firstly, they have a similar number of cache references since the renumbering process does not change the algorithm itself. Secondly, graph projection achieves a better CPU performance than Gorder on almost all the datasets (i.e., less cache misses and less cache miss ratios on 9 datasets) when dealing with the butterfly counting problem with the \newa algorithm.

In summary, our graph projection strategy is more suitable when dealing with the butterfly counting problem.

\vspace{-0.1cm}
\section{Conclusion}

\label{sct:conclusion}
In this paper, we study the \btfc problem. We propose a \newf\ \new which can effectively handle high-degree vertices. We also propose the \newaf\ \newa which improves the CPU cache performance of \new with two cache-aware strategies. We conduct extensive experiments on real datasets and the result shows that our \newa algorithm significantly outperforms the state-of-the-art algorithms. In the future, we plan to study the butterfly counting problem in a distributed environment \cite{wang2010mapdupreducer, lin1993data, lai2015scalable} or a data stream \cite{wang2015ap, henzinger1998computing}.
\balance
\newpage

{
\bibliographystyle{abbrv}
\bibliography{paper}
}

\end{document}